\newcommand{\be}{\begin{eqnarray}}
\newcommand{\ee}{\end{eqnarray}}
\newtheorem{theo}{Theorem}
\newtheorem{lemma}{Lemma}
\newtheorem{prop}{Proposition}
\newcommand{\R}{\mathbb R}
\newcommand{\C}{\mathbb C}
\newcommand{\bxi}{\boldsymbol\xi}
\newcommand{\bu}{{\bf u}}
\newcommand{\talpha}{{\tilde\alpha}}
\begin{document}

\title{Decay and Subluminality of Modes of all Wave Numbers in the Relativistic Dynamics of\\
Viscous and Heat Conductive Fluids}

\author{\it Heinrich Freist\"uhler\thanks{Department of Mathematics and Statistics,
University of Konstanz, 78457 Konstanz, Germany; supported by DFG
through Grant FR 822/10-1},
Moritz Reintjes$^*$, {\rm and}\  Blake Temple\thanks{Department of Mathematics,
University of California at Davis, Davis CA 95616, USA.}}

\maketitle

\vskip -4cm
\begin{abstract}
To further confirm the causality and stability of a second-order hyperbolic
system of partial differential equations
that models the relativistic dynamics
of barotropic fluids with viscosity and heat conduction
(H. Freist\"uhler and B. Temple, J. Math.\ Phys.\ 59 (2018)),
this paper studies the
Fourier-Laplace modes of this system
and shows that all such modes, relative to arbitrary Lorentz frames,
(a) decay with increasing time and (b) travel at subluminal speeds. Stability is also shown
for the related model of non-barotropic fluids (H. Freist\"uhler and B. Temple.
Proc.\ R. Soc.\ A 470 (2014) and Proc.\ R. Soc.\ A 473 (2017)).
Even though
these properties had been known for a while in the sense of numerical evidence,
the fully analytical proofs for the subluminality of modes of arbitrary wave numbers
in arbitrary frames given here appear to be the first regarding
\emph{any}\ five-field formulation of dissipative relativistic fluid dynamics.
\end{abstract}

\newpage
\section{Introduction}\label{Introduction}
In this paper, we study a hyperbolic model
for the dynamics of relativistic viscous fluids 
and confirm its stability and causality in terms of decay and subluminal speed of its Fourier-Laplace modes.
The result is complete in the sense that it covers modes of arbitrary wave number and with respect
to arbitrary Lorentz frames.
Concretely, we address the system
\be\label{Teq}
{\partial \over \partial x^\beta}\left(T^{\alpha\beta}
+\Delta T^{\alpha\beta}_\Box
\right)
=0
\ee
which was recently proposed
as a candidate for a relativistic version of Navier-Stokes \cite{FT1,FT3}.
While in \eqref{Teq}
\be \label{T}
T^{\alpha\beta}=\rho U^\alpha U^\beta+p \Pi^{\alpha\beta}
\ee
is the usual ideal-fluid energy-momentum tensor, with $\rho,p, U^\alpha$ the fluid's energy,
pressure and 4-velocity, and $\Pi^{\alpha\beta}=g^{\alpha\beta}+U^\alpha U^\beta$ the projector
on the orthogonal complement of $U^\alpha$, the dissipation tensor reads
\be\label{DeltaT}
\begin{aligned}
-\Delta T^{\alpha\beta}_\Box
&=&\eta \Pi^{\alpha\gamma}\Pi^{\beta\delta}
\left[\frac{\partial U_{\gamma}}{\partial x^{\delta}}
+\frac{\partial U_{\delta}}{\partial x^{\gamma}}
-\frac{2}{3}g_{\gamma\delta}\frac{\partial U^{\epsilon}}{\partial x^{\epsilon}}\right]
+\check\zeta \Pi^{\alpha\beta}\frac{\partial U^{\gamma}}{\partial x^{\gamma}}\\
&&+\sigma\left[U^{\alpha}U^{\beta}\frac{\partial U^{\gamma}}{\partial x^{\gamma}}
-\left(\Pi^{\alpha\gamma}U^{\beta}
+\Pi^{\beta\gamma}U^{\alpha}\right)
U^{\delta}\frac{\partial U_{\gamma}}{\partial x^{\delta}}\right]
\\
&&+
\chi
\bigg[
\bigg(U^\alpha\frac{\partial \theta}{\partial x_\beta}
+U^\beta\frac{\partial \theta}{\partial x_\alpha}\bigg)
-g^{\alpha\beta}U^\gamma\frac{\partial \theta}{\partial x^\gamma}
\bigg],
\end{aligned}
\ee
where 
\be\label{choicetildezeta}
\sigma=((4/3)\eta+\zeta)/(1-c_s^2)-c_s^2\chi\theta
\quad\text{and}\quad
\check\zeta=\zeta+c_s^2\sigma-c_s^2(1-c_s^2)\chi\theta
\ee
with $\eta,\zeta,\chi$ the coefficients of shear viscosity, bulk viscosity, and thermal conductivity,
and $0<c_s<1$ the speed of sound.
We refer the interested reader to \cite{FT3} for a detailed derivation and justification of
equations \eqref{Teq} with \eqref{DeltaT}, \eqref{choicetildezeta},
and to \cite{FT1,FT2,F20} for brief summaries of the history and current developments of
the mathematical theory of relativistic viscous fluid flow.  As in \cite{FT3} we will assume that
the fluid is barotropic, i.e., pressure and sound speed are functions of the
energy,
\be\label{P}
p=P(\rho),\quad c_s^2=P'(\rho),
\ee
which implies that the local state of the fluid has four degrees of freedom (three for the velocity,
which is constrained by $U_\alpha U^\alpha=-1$, and one for the thermodynamics) and correspondingly
the conservation laws \eqref{Teq} for energy-momentum suffice to determine the dynamics.

We focus here on Fourier-Laplace modes. These are solutions of the form
\be\label{FLmode}
\hat\psi_\alpha
e^{\lambda t+i\bxi\cdot \boldsymbol x}
\quad (\lambda,\bxi)\in \C\times\R^3,
\ee
of the linearization of \eqref{Teq} at some homogeneous reference state, with $\boldsymbol x$
referring to the spatial variables $x^1,x^2,x^3$  and $t$ to time $x^0$.
In relativity, the notion of a Fourier-Laplace mode with
its distinction between spatial and temporal frequencies is relative to the choice of a Lorentz
frame.
A non-trivial mode, \eqref{FLmode} with $\bxi\neq 0$, is called
(a) decaying, neutral, or growing if
$\hbox{Re}\{\lambda\}<0,=0,$ or $>0$, respectively, and
(b) subluminal, luminal, or superluminal if $|\hbox{Im}\{\lambda\}|<|\bxi|,=|\bxi|$, or
$>|\bxi|$, respectively.

\begin{theo}\label{ThmD}
For any values
\be\label{etahatzeta}
\eta>0,\quad \check\zeta\ge -\frac13\eta,\quad \sigma=\frac43\eta+\check\zeta,\quad\chi\ge 0,
\ee
the viscous system \eqref{Teq} with \eqref{DeltaT}--\eqref{P}
is dissipative in the sense that all its non-trivial Fourier-Laplace modes with respect to any
Lorentz frame are decaying.
\end{theo}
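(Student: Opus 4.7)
The plan is to reduce the statement to a finite-dimensional spectral problem and then locate the roots of its dispersion polynomial via an energy identity built from the manifest dissipative structure of $\Delta T^{\alpha\beta}_\Box$. First I would linearize \eqref{Teq}--\eqref{choicetildezeta} about a homogeneous reference state, writing $U^\alpha = \bar U^\alpha + u^\alpha$ and $\rho = \bar\rho + r$ with the linearized constraint $\bar U_\alpha u^\alpha = 0$. Because the claim must hold in an arbitrary Lorentz frame, I keep $\bar U^\alpha = \gamma(1,\boldsymbol v)$ general with $|\boldsymbol v|<1$ and use only the residual rotational freedom to place $\boldsymbol v$ and $\bxi$ in a common coordinate plane. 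Substituting \eqref{FLmode} then converts the linearized PDE into an algebraic problem $M(\lambda,\bxi;\bar U)\hat\psi = 0$, and the task becomes to show that any nontrivial $\hat\psi$ in its kernel forces $\mathrm{Re}\{\lambda\}<0$ whenever $\bxi\neq 0$.

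Next I would split $\hat\psi$ into a shear part (transverse to the $(\boldsymbol v,\bxi)$-plane) and a longitudinal part. The shear block is driven purely by $\eta>0$ and yields a low-order polynomial whose roots visibly lie in the open left half-plane. For the longitudinal block, which couples $\hat r$, the in-plane velocity components, and the heat-conduction contribution through $\chi$, I would aim for an energy-type identity: construct a Hermitian positive-definite form $E$ and a Hermitian non-negative form $D$ (strictly positive when $\bxi\neq 0$) such that $M\hat\psi = 0$ enforces
\begin{equation*}
2\,\mathrm{Re}\{\lambda\}\,E(\hat\psi,\hat\psi) + D(\hat\psi,\hat\psi) = 0,
\end{equation*}
from which $\mathrm{Re}\{\lambda\}<0$ is immediate. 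The natural candidate for $E$ is the linearized comoving energy of the ideal part $T^{\alpha\beta}$; $D$ should come from the manifestly dissipative terms of $\Delta T^{\alpha\beta}_\Box$, namely the squared traceless shear tensor, the squared divergence, and the squared temperature gradient, weighted by $\eta$, $\check\zeta$, $\chi$. The hypotheses \eqref{etahatzeta} together with $c_s^2\in(0,1)$ should make $D\ge 0$, with strict positivity forced by the nontriviality of $\bxi$.

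The main obstacle I anticipate is frame-uniformity. In the fluid rest frame ($\boldsymbol v = 0$) both $E$ and $D$ have clean expressions built from $\Pi^{\alpha\beta}$; in a general frame the boost mixes the temporal and spatial components of $k^\mu = (-i\lambda,\bxi)$, and the same pair of forms must continue to dominate both the inertial term $\lambda E$ and the dissipative contribution $D$ uniformly in $\boldsymbol v$. The off-diagonal couplings induced by the $\sigma$-terms in \eqref{DeltaT}, together with the interaction between the heat flux and the velocity perturbation, are precisely where frame-uniform positivity is most delicate. If a clean frame-uniform pair $(E,D)$ resists construction, the fallback is Routh--Hurwitz applied directly to the longitudinal dispersion polynomial $p_L(\lambda)$: treating $|\boldsymbol v|$ and $\boldsymbol v\cdot\bxi$ as parameters, one would verify positivity of the leading Hurwitz determinants using the sign conditions \eqref{etahatzeta}. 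This is routine in principle but algebraically demanding, and ensuring the positivity survives all admissible values of $|\boldsymbol v|<1$ is the heart of the argument.
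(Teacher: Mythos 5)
There is a genuine gap: your plan correctly identifies the arbitrary-frame uniformity as ``the heart of the argument,'' but neither of your two proposed routes actually closes it, and the idea that does close it in the paper is absent from your proposal. First, the energy identity $2\,\mathrm{Re}\{\lambda\}E+D=0$ cannot hold in the stated form for this system, because the dissipation tensor \eqref{DeltaT} contains second-order \emph{time} derivatives (the $\sigma U^\alpha U^\beta \partial_\gamma U^\gamma$ and $\chi U^\alpha U^\beta U^\gamma\partial_\gamma\theta$ terms): pairing $M\hat\psi=0$ with $\hat\psi$ produces a contribution proportional to $\mathrm{Re}(\lambda^2)=(\mathrm{Re}\,\lambda)^2-(\mathrm{Im}\,\lambda)^2$, which is not sign-definite and is not absorbed by any $D\ge 0$. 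Second, the Routh--Hurwitz fallback is not merely ``algebraically demanding'': under the boost \eqref{Lorentz_var} the dispersion polynomial in $\tilde\lambda$ acquires genuinely \emph{complex} coefficients (see e.g.\ \eqref{Pi_T_gen2}, where $iv\tilde\xi_1$ enters the linear and quadratic terms), so the classical real-coefficient Routh--Hurwitz criterion does not apply; even your claim that the transverse block's roots ``visibly'' lie in the left half-plane is only true in the rest frame. The hypotheses \eqref{etahatzeta} give you clean sign information only in the rest frame.

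The paper's proof avoids the frame-by-frame computation entirely by exploiting two facts you do not use: (i) \emph{neutrality is frame-invariant} --- a mode with $\mathrm{Re}\{\lambda\}=0$ has real wave covector $\xi^\beta$, hence is neutral in every Lorentz frame, so the known rest-frame decay (from \cite{FT3}) already rules out neutral modes in all frames; and (ii) the roots of $\tilde\Pi_v(\cdot,\tilde\bxi)$ depend continuously on the boost parameter $v$ and stay in a fixed compact set (this uses that the leading coefficient in $\tilde\lambda$ stays bounded away from zero, which follows from the causality of the principal symbol). A homotopy in $v$ from the given frame down to $v=0$ then shows that a non-decaying mode would force a root onto the imaginary axis at some intermediate $v^*$, contradicting (i). To repair your proposal you would need to supply precisely this continuation argument (or some substitute for it); the direct verification you sketch is where previous attempts bog down, which is presumably why the paper does not take that route.
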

\begin{theo}\label{ThmC}
For system \eqref{Teq} with \eqref{DeltaT}--\eqref{P} 
all non-trivial Fourier-Laplace modes with respect to any Lorentz frame are subluminal.
\end{theo}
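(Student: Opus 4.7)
My plan is to exploit the Lorentz-covariant structure of the dispersion relation together with the decay result Theorem~\ref{ThmD}. Substituting the Fourier--Laplace ansatz \eqref{FLmode} into the linearization of \eqref{Teq} at the reference state yields a characteristic equation $D(k)=0$ for a complex 4-covector $k\in\mathbb{C}^4$. By the Lorentz covariance of the equations and the $SO(3)$ rotational invariance about the background 4-velocity $U$, the polynomial $D$ depends on $k$ only through the two scalar invariants $q=k\cdot U$ and $K=k\cdot k$. A Fourier--Laplace mode in a frame of 4-velocity $V$ is encoded by $k=-\lambda V+ib$ with $b\cdot V=0$ and $b$ real. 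The key observation is that the subluminality condition $|\mathrm{Im}\,\lambda|<|b|$ is equivalent to the Lorentz-invariant statement $\mathrm{Im}(k)\cdot\mathrm{Im}(k)>0$, so the theorem reduces to the frame-independent claim: for every $k\in\mathbb{C}^4$ satisfying $D(k)=0$ with $\mathrm{Re}(k)$ timelike, $\mathrm{Im}(k)$ is spacelike.

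Next I would split $D$ according to the rest-frame decomposition of the linearized system into a transverse (shear) sector, with characteristic polynomial $D_T$, and a longitudinal (sound/heat) sector, with polynomial $D_L$. For $D_T$, a direct computation in the fluid rest frame gives the quadratic $\sigma\lambda^2+(\rho+p)\lambda+\eta|b|^2=0$; the identity $\sigma=(4/3)\eta+\check\zeta$, which follows algebraically from \eqref{choicetildezeta}, combined with the hypothesis $\check\zeta\ge-\eta/3$ from \eqref{etahatzeta}, gives $\sigma\ge\eta$, and a discriminant analysis then yields $|\mathrm{Im}\,\lambda|<|b|$ at once. The inequality extends to a general boosted frame by rewriting $D_T$ as a covariant polynomial in $(q,K)$ and using $A:=\sigma u^2-\eta(u^2-1)\ge\sigma>0$ for $u=-V\cdot U\ge 1$. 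For $D_L$, the polynomial is of higher degree, and I would attempt a continuity/homotopy argument: scale $\eta,\check\zeta,\chi$ by $\epsilon\in[0,1]$; at $\epsilon=0$ the modes reduce to inviscid sound waves of speed $c_s<1$, hence manifestly subluminal, and Theorem~\ref{ThmD} rules out a crossing of the luminal boundary with $\mathrm{Re}\,\lambda>0$, so it suffices to exclude a crossing on the decaying locus $\{\mathrm{Im}(k)\cdot\mathrm{Im}(k)=0,\ \mathrm{Re}(k)\text{ future-timelike}\}$.

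The principal obstacle is to verify that $D_L$ does not vanish on this decaying luminal locus for any admissible boost parameters $u\ge 1$ and $\beta=b\cdot U$ (with $|\beta|\le|b|\sqrt{u^2-1}$) and any admissible dissipation coefficients. This amounts to a multi-parameter algebraic inequality whose clean resolution will almost certainly have to invoke the specific form of $\sigma$ and $\check\zeta$ dictated by \eqref{choicetildezeta} --- the very choice made in \cite{FT3} to enforce causality --- rather than a generic Routh--Hurwitz-style estimate. Establishing this inequality in a unified rather than exhaustive case-by-case manner is where I expect the real technical work of the proof to lie.
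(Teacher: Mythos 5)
Your global architecture is sound and is essentially the paper's: recast subluminality as the Lorentz-invariant statement that the real part of the wave 4-covector is spacelike, use the decay result to rule out luminal crossings on the neutral or growing side, and reduce by a continuity/homotopy argument to showing that no \emph{decaying luminal} mode exists. Two remarks on the set-up, though. First, the paper homotopes in the boost parameter $v$ down to the rest frame (where subluminality is known from \cite{FT3}), and Lemma 1 is exactly what keeps that homotopy honest: the leading coefficient of $\tilde\Pi_v(\cdot,\tilde\bxi)$ stays bounded away from zero, so no root escapes to infinity. Your homotopy in the dissipation strength $\epsilon$ degenerates at $\epsilon=0$ (the degree in $\lambda$ drops and two relaxation roots run off to infinity), so the analogue of Lemma 1 fails there and you would need an extra argument that the escaping roots stay off the luminal locus. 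Second, your transverse sector is only settled in the rest frame; the claim that the "discriminant analysis extends at once" to a boosted frame is not an argument, since \eqref{Pi_T_gen2} is a quadratic with complex, $\tilde\xi_1$-dependent coefficients, and the paper's Proposition \ref{Prop_nonlum_T} in fact requires the same two-step imaginary-part/real-part analysis as the longitudinal case.

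The genuine gap is that you stop exactly where the proof begins. Excluding zeros of $\tilde\Pi^L_v$ (and $\tilde\Pi^T_v$) on the decaying luminal locus for all boosts and all admissible coefficients is the entire content of Sections 3 and 4 (Propositions \ref{Prop_nonlum_L}, \ref{Prop_nonlum_T}, \ref{nonlum_L_chi>0}), and you explicitly defer it ("where I expect the real technical work of the proof to lie"). The mechanism the paper uses is concrete and worth knowing: substitute $\tilde\lambda=\tilde\alpha+iw$ with $w=\pm|\tilde\bxi|$ and $\tilde\alpha<0$; the vanishing of the imaginary part of the dispersion relation together with $|\tilde\xi_1|\le|w|$ confines $\tilde\alpha$ to an explicit compact set $\mathcal A$ (an interval for $\chi=0$, a union of two intervals for $\chi>0$); the real part is then recast as $A(\tilde\alpha)w^2+B(\tilde\alpha)p_1(\tilde\alpha)^2=0$, and sign control of $A$ and $B$ on $\mathcal A$ --- via quadratic forms in $p_\pm=p_2\pm vp_1$, resultant computations, and continuation through parameter space away from the degenerate surface $\Gamma=0$, with the coincidence cases handled separately --- shows this equation cannot hold. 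You are right that the specific choices in \eqref{choicetildezeta} enter essentially (e.g., through $\hat\sigma\ge1$ and the positivity of $k$ in \eqref{A}), and right that no generic Routh--Hurwitz estimate suffices; but as written your proposal is a correct reduction of Theorem \ref{ThmC} to its hard core, not a proof of it.
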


Theorem 1 is proved in Section 2, Theorem 2 in Sections 3 and 4. 
In Section 5, we address a counterpart of Theorem 1 for non-barotropic fluids; the corresponding
model, taken from \cite{FT2}, is introduced there.  

The interpretation of Theorem 1 is clear -- it gives linear stability of homogeneous states in
all reference frames.  Rigorous stability results at the fully nonlinear level have been
accomplished by  Sroczinski \cite{S1,S2}.

Theorem 2  is related to  causality. Regarding modes, causality means that
$\lim_{|\bxi|\to\infty}(|\lambda(\bxi)|/|\bxi|)$ $\le 1$, a property that for
the models considered in the present paper has been checked in \cite{FT2,FT3}.
Our theorem proves a finer property, namely the fact that all \emph{phase velocities},
with $0<|\bxi|<\infty$,  are less
than the speed of light. Recall the recurrent considerations in the literature on interesting physical
systems which do not have this property (see \cite{B} for an early documentation).
For our models, the property is \emph{plausible}, as
both the second-order operator (corresponding to ``$|\bxi|=\infty$'') and the first order operator
(corresponding to ``$|\bxi|= 0$'') have  (sub-/)luminal phase velocity. Our \emph{proof} suggests the possibility of a more general principle granting
such ``interpolation''
between hyperbolic operators of different orders,
but the potential identification of such principle is here left to future work.

\section{Modes and stability in arbitrary frames}
\setcounter{equation}{0}
We first note that a solution to the linearized PDE which is a Fourier-Laplace mode with respect
to a given
frame is generally not a Fourier-Laplace mode with respect to a different frame. We thus briefly
consider \emph{general modes}, i.e., solutions of the linearized
equations of the form
\be\label{mode}
{{\hat\psi_\alpha}}
e^{i\xi^\beta  x_\beta},
\quad \xi^\beta\in\C^{4},
\ee
with, in principle, arbitrary \emph{complex} wave vectors. This \emph{is} a covariant notion;
under frame changes, the amplitude $\hat\psi_\alpha$ and notably the real part $\text{\rm Re }\xi^\beta$
and the imaginary part $\text{\rm Im }\xi^\beta$ of the wave vector transform as 4-vectors.
A general mode \eqref{mode} is a Fourier-Laplace
mode with respect to \emph{some} Lorentz frame if and only if
\be    \label{FLM_invariant_form}
\text{\rm Im }\xi^\beta \text{ is timelike or zero,}
\ee
in the latter case it is \emph{neutral}.
 It is a Fourier-Laplace mode with respect to the fluid rest frame,
if $\text{\rm Im }\xi^\beta$ is a multiple of the future-pointing fluid $4$-velocity
$U^\beta$ of the homogeneous reference state. A Fourier-Laplace mode is decaying if
$\text{\rm Im }\xi^\beta$ is future-pointing, and growing if $\text{\rm Im }\xi^\beta$ is past-pointing.
A Fourier-Laplace mode is \emph{subluminal} or \emph{luminal} if
\be
\text{\rm Re }\xi^\beta \text{ is spacelike or lightlike}.
\ee

We will study solutions that are Fourier-Laplace modes with respect to \emph{some} frame
by looking at them in the fluid's rest frame. To prepare for that perspective,
consider now an arbitrary general mode from the point of view of its
representation in the fluid's rest frame, let this representation be given by
\be\label{genmodeinrestframe}
\hat\psi_\alpha
e^{\lambda t+i\bxi\cdot \boldsymbol x},
\quad (\lambda,\bxi)\in \C\times\C^3;
\ee
i.e., relative to \eqref{mode} we have written $\bxi$ for $(\xi^1,\xi^2,\xi^3)$ and $\lambda$
for $i\xi^0$;
 the only difference of \eqref{genmodeinrestframe} from \eqref{FLmode} is that
$\bxi$ now is complex.
Assume that with respect to a fixed ``rest frame``, the dispersion relation for a covariant linear
system of partial differential equations is given by
\be\label{drrf}
0=\Pi(\lambda,\bxi)
\ee
i.e., $\Pi$ results from the symbol of the differential operator describing the system
with respect to the rest frame.

Consider now a solution to the PDE system
which is a Fourier-Laplace mode
\be\label{tildemode}
\widetilde{\hat\psi_\alpha}
e^{\tilde\lambda \tilde t+i\tilde\bxi\cdot\tilde{\boldsymbol x}}
=
\widetilde{{\hat\psi}_\alpha}
e^{Re\{\tilde\lambda\}\tilde t}e^{i\left(Im\{\tilde\lambda\}t+\tilde\bxi\cdot \tilde{\boldsymbol x} \right)},
\quad (\tilde\lambda,\tilde\bxi)\in \C\times\R^3,
\ee
with respect to a frame which is not the rest frame, but differs from it by a Lorentz boost.
In order to nevertheless study it by means
of the rest-frame dispersion relation \eqref{drrf} we note (cf.\ also \cite{HL}) that
if \eqref{FLmode} represents the same mode with respect to the rest frame,  the complex wave
vectors
$(\tilde\lambda,i\tilde\bxi)$ and $(\lambda,i\bxi)$ are related
through the Lorentz transformation
\be
(\lambda,\bxi)=\Lambda_v(\tilde\lambda,\tilde\bxi),
\ee
from the non-rest frame to the rest frame, i.e.,
\be\label{Lorentz_var}
\begin{aligned}
\lambda&=\gamma\tilde\lambda+i\gamma v\tilde\xi_1 \\
\xi_1&=-i\gamma v\tilde\lambda+\gamma\tilde\xi_1 \\
\xi_2&=\tilde\xi_2\\
\xi_3&=\tilde\xi_3,
\end{aligned}
\ee
with some $$v\in(-1,1)\text{ and }\gamma=(1-v^2)^{-1/2}.$$
The Fourier-Laplace modes in the tilde 
frame are thus controlled by the dispersion relation
\be \label{disp_rel}
0=\tilde\Pi_v(\tilde\lambda,\tilde\bxi)
\ee
with
$$
\tilde\Pi_v(\tilde\lambda,\tilde\bxi)
\equiv
\Pi(\Lambda_v(\tilde\lambda,\tilde\bxi))=\Pi(\lambda,\bxi).
$$

In our specific situation, the linearized equations of motion in the fluid's rest frame,
\be 
\begin{aligned}\label{PDElin}
\frac{\rho+p}\theta\left\{c_s^{-2}
{\partial \theta\over \partial t}
+\theta\nabla\cdot\bu
\right\}
+\left\{\chi\left(\frac{\partial^2\theta}{\partial t^2}-\Delta\theta\right)\right\}
&=0,\\
\frac{\rho+p}\theta\left\{
\theta{\partial \bu\over \partial t}
+\nabla\theta
\right\}
+\left\{{\sigma}\frac{\partial^2\bu}{\partial t^2}
-({\eta}\,\nabla\cdot{\cal S}\bu+{\check\zeta}\,\nabla(\nabla\cdot \bu))\right\}&=0
\end{aligned}
\ee
yield, after a simple scaling, the dispersion relation
\be\label{dr}
0=\Pi(\lambda,\bxi)\equiv \det M(\lambda,\bxi),
\ee
with
(cf.\ \cite{FT3})
$$
M(\lambda,\bxi)
=
\begin{pmatrix}
c_s^{-2}\lambda  & i\bxi^\top \\
i\bxi & \lambda I
\end{pmatrix}
+
\begin{pmatrix}
\chi(\lambda^2+\bxi^2)
   & 0 \\
0&  N(\lambda,\bxi)
\end{pmatrix},
$$
where
$$
N(\lambda,\bxi)=\left(\sigma\lambda^2+\eta\bxi^2\right)I
+\left(\check\zeta+{1\over 3}\eta\right)\bxi\bxi^\top,
$$
with $\bxi^2\equiv\bxi^\top\bxi=\xi_1^2+\xi_2^2+\xi_3^2,$
and the dispersion relation factors as
\be \label{drsplit}
\Pi(\lambda,\bxi)=\Pi^L(\lambda,\bxi) \big(\Pi^T(\lambda,\bxi) \big)^2
\ee
with
$$
\Pi^L(\lambda,\bxi)=\{c_s^{-2}\lambda +\chi (\lambda^2+\bxi^2)\}
\{\lambda + \sigma\left(\lambda^2+\bxi^2\right)\}
+ \bxi^2
$$
and
$$
\Pi^T(\lambda,\bxi)=\lambda+\sigma\lambda^2+\eta\bxi^2;
$$
to simplify notation, we will use the again scaled versions
\be
\Pi^L(\lambda,\bxi)=\{\lambda +\hat\chi (\lambda^2+\bxi^2)\}
\{\lambda +\left(\lambda^2+\bxi^2\right)\}
+ c_s^2 \bxi^2
\ee
and
\be
\Pi^T(\lambda,\bxi)
=\lambda+\hat\sigma\lambda^2+\bxi^2
\ee
with $\hat\chi=c_s^2\chi/\sigma\ge 0$ and $\hat\sigma=\sigma/\eta\ge 1$.

Correspondingly, the Lorentz transformed dispersion relation
\eqref{disp_rel} also factors, as
\be\label{drtildesplit}
\tilde\Pi_v(\tilde\lambda,\tilde\bxi) =
\tilde\Pi^L_v(\tilde\lambda,\tilde\bxi)  \big(\tilde\Pi^T_v(\tilde\lambda,\tilde\bxi)\big)^2,
\ee
with
\be \label{tildePiLtildePiT}
\tilde\Pi^L_v(\tilde\lambda,\tilde\bxi)
\equiv
\Pi^L(\lambda,\bxi),\quad
\tilde\Pi^T_v(\tilde\lambda,\tilde\bxi)
\equiv
\Pi^T(\lambda,\bxi).
\ee
Concretely, setting $r^2\equiv \xi_2^2 +\xi_3^2 = \tilde\xi_2^2 + \tilde\xi_3^2$
and substituting the identities
$$
\lambda^2 + \bxi^2 = \tilde\lambda^2 + \tilde\bxi^2
\hspace{1cm} \text{and} \hspace{1cm}
\bxi^2 = r^2 - \gamma^2 (\tilde\lambda v + i \tilde\xi_1)^2,
$$
into $\tilde\Pi^L$, we obtain
\begin{eqnarray}  \label{Pi_L_gen2}
\tilde\Pi^L(\tilde\lambda,\tilde\bxi)
& = & \hat\chi  \big(\tilde\lambda^2+\tilde\bxi^2\big)^2 + \gamma (\hat\chi +1) \big(\tilde\lambda + i v \tilde\xi_1\big)\big(\tilde\lambda^2+\tilde\bxi^2\big) + \gamma^2 \big(\tilde\lambda + i v \tilde\xi_1\big)^2  \cr  & &    +\; c_s^2\big( r^2 - \gamma^2 (\tilde\lambda v + i \tilde\xi_1)^2 \big),
\end{eqnarray}
and a straightforward computation gives
\begin{align} \label{Pi_T_gen2}
\tilde\Pi^T(\tilde\lambda,\tilde\bxi)
&=\gamma(\tilde\lambda+iv\tilde\xi_1) +\hat\sigma(\gamma\tilde\lambda+i\gamma v\tilde\xi_1)^2 + (-i\gamma v\tilde\lambda+\gamma\tilde\xi_1)^2+r^2.
\end{align}
We now turn to the proof of Theorem 1, which expresses  a general principle
to conclude stability in all frames from stability in one frame, based on the idea
to bound Re$(\lambda)$ away from 0 by showing that neutral modes do not exist.\footnote{This idea
seems to have been brought up first in \cite{FT1}, Lemmas 6.4 and 6.6.}
\begin{prop}
There does not exist a nontrivial Fourier-Laplace mode of \eqref{PDElin}
in any Lorentz frame, which is neutral.
\end{prop}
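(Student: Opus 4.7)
My plan is to reduce the claim, via the Lorentz covariance already set up in \eqref{Lorentz_var}--\eqref{tildePiLtildePiT}, to a statement about the rest-frame dispersion relation \eqref{drsplit} that can be checked by separating real and imaginary parts. Suppose for contradiction that some nontrivial Fourier-Laplace mode, in a Lorentz frame differing from the rest frame by a boost of velocity $v$, is neutral; i.e., $\tilde\lambda=i\omega$ with $\omega\in\R$ and $\tilde\bxi\in\R^3\setminus\{0\}$. Then \eqref{Lorentz_var} gives a rest-frame representation with $\lambda=i\mu$, $\mu\equiv\gamma(\omega+v\tilde\xi_1)\in\R$, and $\bxi=(\gamma(v\omega+\tilde\xi_1),\tilde\xi_2,\tilde\xi_3)\in\R^3$. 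It therefore suffices to show that neither $\Pi^L(i\mu,\bxi)$ nor $\Pi^T(i\mu,\bxi)$ can vanish for $(\mu,\bxi)\in\R\times\R^3$ unless $\mu=0$ and $\bxi=0$.

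For $\Pi^T$, splitting into real and imaginary parts yields $\Pi^T(i\mu,\bxi)=(\bxi^2-\hat\sigma\mu^2)+i\mu$, so vanishing forces $\mu=0$ and then $\bxi=0$. For $\Pi^L$, setting $A\equiv\bxi^2-\mu^2\in\R$, expansion gives
\[
\Pi^L(i\mu,\bxi) \;=\; \big(\hat\chi A^2 - \mu^2 + c_s^2\bxi^2\big) \;+\; i(\hat\chi+1)\mu A,
\]
and vanishing of the imaginary part forces $\mu=0$ or $A=0$. The $\mu=0$ branch leaves the real part $\bxi^2(\hat\chi\bxi^2+c_s^2)$, which is strictly positive whenever $\bxi\neq 0$. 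The $A=0$ branch gives $\mu^2=\bxi^2$ and real part $(c_s^2-1)\bxi^2$, which is strictly negative whenever $\bxi\neq 0$ because $c_s<1$. In either subcase one concludes $\bxi=0$ in the rest frame.

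It then remains to exclude the residual possibility that the rest-frame $\bxi$ vanishes while the tilde-frame wave number $\tilde\bxi$ does not. In that case the dispersion relation degenerates to $\Pi(\lambda,0)=\lambda^4(1+\hat\chi\lambda)(1+\lambda)(1+\hat\sigma\lambda)^2$, whose only purely imaginary zero is $\lambda=0$; substituting $(\lambda,\bxi)=(0,0)$ into \eqref{Lorentz_var} leaves the linear system $\omega+v\tilde\xi_1=0$, $v\omega+\tilde\xi_1=0$, $\tilde\xi_2=\tilde\xi_3=0$, whose determinant $1-v^2>0$ forces $\omega=0$ and $\tilde\bxi=0$, contradicting nontriviality. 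I expect the main obstacle to be the $A=0$ branch of the $\Pi^L$ analysis: this is precisely the configuration in which the candidate neutral mode would move at the speed of light in the rest frame, and the strict subluminality $c_s<1$ of sound is what closes the argument there. Everything else is bookkeeping about how boosts act on complex wave vectors, together with the sign information built into the assumptions $\hat\chi\ge 0$ and $\hat\sigma\ge 1$.
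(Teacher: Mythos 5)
Your proof is correct and follows the same essential reduction as the paper: neutrality of a Fourier--Laplace mode is a Lorentz-invariant property (the full complex wave vector is real, and stays real under boosts), so the claim reduces to the absence of neutral modes in the fluid's rest frame. The only difference is that where the paper simply cites \cite{FT3} for that rest-frame fact, you verify it directly by splitting $\Pi^L(i\mu,\bxi)$ and $\Pi^T(i\mu,\bxi)$ into real and imaginary parts; your computations (including the $A=0$ branch closed by $c_s^2<1$, and the final step showing $(\lambda,\bxi)=(0,0)$ forces $(\tilde\lambda,\tilde\bxi)=(0,0)$ via the invertibility of the boost) are all correct, so your argument has the mild advantage of being self-contained.
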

\begin{proof}
While the question of whether a given \emph{general} mode \eqref{mode} is a Fourier-Laplace mode
with respect to a Lorentz frame generically depends on the choice of the frame,
a  mode that is a \emph{neutral} Fourier-Laplace mode with respect to one Lorentz frame, obviously
is a neutral Fourier-Laplace mode with respect to any other Lorentz frame as well. Thus if a
neutral nontrivial Fourier-Laplace mode with respect to some frame existed, it would readily be
a neutral nontrivial Fourier-Laplace mode with respect to the fluid's rest frame. But it is
known from \cite{FT3} that with respect to the fluid's rest frame, the model under consideration
allows only decaying, thus no neutral nontrivial Fourier-Laplace modes.
\end{proof}

{ \it Proof of Theorem 1.}
Assume that with respect to some frame, there is a non-trivial
Fourier-Laplace mode that does not decay, briefly: for some $\tilde v\in(-1,1)\setminus\{0\}$
there exists
$\tilde\bxi\in\R^3\setminus\{0\}$ such that $\tilde\Pi_{\tilde v}(.,\tilde\bxi)$ has a zero
$\tilde\lambda$ with positive real part. However, the set of (complex) zeroes of a continuously
parameterized family of (complex) polynomials depends continuously on the parameter. Using this
for the family of polynomials $\tilde \Pi_v(.,\tilde\bxi)$ with $\tilde\bxi$ fixed and the
parameter $v$ varying down or up to $0$, noting that during that variation the following
Lemma 1 keeps the set of zeroes within a fixed ball of finite radius $R(|\tilde v|,\{\tilde\bxi\})$ -- so none
can escape or come in across the boundary of that ball --, and recalling that all roots of
$\tilde\Pi_0(.,\tilde\bxi)$ have strictly negative real part, we see that the
Intermediate Value Theorem implies that there exists a value $v^*\in (-1,1)$ for which
$\tilde\Pi_{v^*}(.,\tilde\bxi)$ has a root with vanishing real part. This would contradict
Proposition 1.  \hfill $\Box$

\begin{lemma}
For all $\bar v\in [0,1)$ and any compact set $X\subset \R^3$, there exists $R=R(X,\bar v)>0$ such
that if $\tilde\lambda\in\C$ satisfies      
$
{\tilde\Pi}_v(\tilde\lambda,\tilde\bxi)=0$ with $|v|\le \bar v\text{ and }\tilde\bxi\in X,
$
then
$|\tilde\lambda|\le R.$ In particular, the set of zeros of $\tilde\Pi_v(.,\bxi)$ depends continuously
on the parameter $v$.
\end{lemma}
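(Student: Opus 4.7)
\emph{Proof plan.} The plan is to exploit the factorization $\tilde\Pi_v=\tilde\Pi^L_v(\tilde\Pi^T_v)^2$ from \eqref{drtildesplit} and bound the zeros of each factor separately. For fixed $\tilde\bxi$ and $v$, each factor is a polynomial in the single variable $\tilde\lambda$ with coefficients that depend continuously (in fact polynomially) on $v$, $\tilde\xi_1$, and $r$. I would then invoke the elementary Cauchy bound: if $p(z)=\sum_{j=0}^na_jz^j$ with $|a_n|\ge c>0$ and $\max_{j<n}|a_j|\le M$, then every root satisfies $|z|\le 1+M/c$. This reduces the lemma to two uniform estimates on the compact parameter set $[-\bar v,\bar v]\times X$: (i) all coefficients of $\tilde\Pi^L_v(\cdot,\tilde\bxi)$ and $\tilde\Pi^T_v(\cdot,\tilde\bxi)$ are bounded above, and (ii) the leading coefficient of each is bounded below away from zero.

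First I would analyze $\tilde\Pi^T_v$. Expanding \eqref{Pi_T_gen2} in powers of $\tilde\lambda$, the coefficient of $\tilde\lambda^2$ works out to $\gamma^2(\hat\sigma-v^2)$, which, since $\hat\sigma\ge 1$ and $|v|\le\bar v<1$, is at least $1-\bar v^2>0$; the coefficients of $\tilde\lambda^1$ and $\tilde\lambda^0$ are polynomials in $\gamma,v,\tilde\xi_1,r$ and hence uniformly bounded. Next I would perform the analogous expansion of \eqref{Pi_L_gen2}: if $\hat\chi>0$ the polynomial has degree $4$ with leading coefficient exactly $\hat\chi$, whereas if $\hat\chi=0$ it has degree $3$ with leading coefficient $\gamma(\hat\chi+1)=\gamma\ge 1$. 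In either case the leading coefficient is a fixed strictly positive constant independent of $(v,\tilde\bxi)$, and the remaining coefficients are again polynomials in the bounded quantities $\gamma,v,\tilde\xi_1,r$. Cauchy's bound applied to both factors then yields a single radius $R=R(X,\bar v)$ as required.

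For the ``in particular'' clause, the continuous dependence of the zero set on $v$ follows from the standard result that the roots of a polynomial of fixed degree depend continuously on its coefficients, provided the leading coefficient stays bounded away from $0$. The verifications above supply precisely this non-degeneracy uniformly in $|v|\le\bar v$, so the degree does not drop and no root can ``escape to infinity'' as $v$ varies. The one point that needs care is the case split $\hat\chi>0$ versus $\hat\chi=0$ in $\tilde\Pi^L_v$, but since $\hat\chi$ is a fixed parameter of the model rather than the running variable of the lemma, this is purely a bookkeeping matter. I do not anticipate a genuine obstacle: the claim is essentially a uniform compactness statement for the zero set of an explicit two-parameter family of polynomials whose leading coefficients are manifestly positive under the standing hypotheses $\hat\chi\ge 0$ and $\hat\sigma\ge 1$.
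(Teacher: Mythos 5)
Your proof is correct and follows essentially the same strategy as the paper's: uniform upper bounds on all coefficients of the family over the compact parameter set $[-\bar v,\bar v]\times X$, together with a uniform positive lower bound on the leading coefficient, from which a standard root bound (Cauchy's) yields the radius $R$ and the continuity of the zero set. The only (harmless) difference is that you verify the leading-coefficient bound explicitly on each factor $\tilde\Pi^L_v$, $\tilde\Pi^T_v$ (obtaining $\hat\chi$ resp.\ $\gamma\ge 1$ for the longitudinal factor and $\gamma^2(\hat\sigma-v^2)\ge 1$ for the transverse one), whereas the paper deduces it for the full $\tilde\Pi_v$ from the causality property that the coefficient of $\lambda^n$ dominates that of $\xi_1^n$ in the rest frame, which survives the Lorentz mixing with lower bound $(1-\bar v^n)a$.
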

\begin{proof}
Fix $\bar v$ and $X$ as stated, and consider the family
$$
\tilde\Pi_v(.,\tilde\bxi),\quad
(v,\tilde\bxi)\in\bar X=[-\bar v, \bar v]\times X
$$
of polynomials in $\tilde\lambda$. The assertion is true if (a) all coefficients of these polynomials
satisfy a uniform bound from above and (b) the leading-order coefficient is bounded uniformly away
from $0$. But (a) is clear due to the compactness of $\bar X$; notably, the Lorentz factor $\gamma$ is
bounded by its value $(1-\bar v)^{-1/2}$ for $v=\bar v$. Property (b) comes from the
(though sharp, cf.\ \cite{FT1})
 causality at
$|\bxi|=\infty$: for the
polynomial $\Pi$ in $\lambda$ and $\xi_1$ that gives the
dispersion relation with respect to the rest frame (considering $\xi_2,\xi_3$ as
parameters), the coefficient of the highest power $\lambda^n$ of $\lambda$
has absolute value $a>0$ larger or equal to that of the highest power $\xi_1^n$ of $\xi_1$.  
In the mixing of $\tilde\lambda$ and $\tilde\xi_1$ through the Lorentz transformation
\eqref{Lorentz_var}, the coefficient of $\tilde\lambda^n$ in $\tilde\Pi_v$ thus remains bounded
from below by $(1-\bar v^n)a>0$.
\end{proof}

\section{Subluminality in absence of heat conduction}

In this and the next section we show that our system does not allow for
luminal modes. As we know from \cite{FT3} that for $v=0$ all modes are subluminal,
the continuity of the solution set of \eqref{disp_rel} with respect to the parameter $v$
according to Lemma 1 implies that all modes are subluminal, for any $v\in (-1,1)$  -- the assertion of Theorem 2.

Details of our proof  depend on whether the fluid does allow for heat conduction
or not. In this section, we consider the easier case that it does not, i.e., we assume that
the heat conductivity vanishes, $\chi=0$. We prove the non-existence
of luminal modes in the longitudinal case and in the transverse case separately in two subsections.

\subsection{Longitudinal modes}   \label{Sec_long_modes_chi0}
\setcounter{equation}0
The purpose of this subsection is to prove that there are no longitudinal luminal modes:
\begin{prop}\label{Prop_nonlum_L}
For any $v\in(-1,1)$,
$\tilde\Pi^L_v(\alpha\pm i|\tilde\bxi|,\tilde\bxi)\neq 0$ for all
$\alpha<0$ and $\tilde\bxi\in\R^3\setminus\{0\}$.
\end{prop}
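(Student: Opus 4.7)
The plan is to prove Proposition \ref{Prop_nonlum_L} by direct substitution into \eqref{Pi_L_gen2}. Setting $\hat\chi=0$ and plugging in $\tilde\lambda = \alpha + i|\tilde\bxi|$, I would exploit the ``luminal'' identity $\tilde\lambda^2+\tilde\bxi^2 = \alpha(\alpha+2i|\tilde\bxi|)$ together with the Lorentz-type identity $r^2+\gamma^2 R^2 = \gamma^2 Q^2$ (writing $q=|\tilde\bxi|$, $s=\tilde\xi_1$, $Q=q+vs$, $R=vq+s$) to streamline the real/imaginary splitting. The first simplification pulls a factor of $\alpha$ out of the $(\tilde\lambda^2+\tilde\bxi^2)$ factor appearing in \eqref{Pi_L_gen2}; the second eliminates $r^2$ in favor of $Q^2$.

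With the further abbreviations $D=3q+vs$, $M=Q-c_s^2 vR$, and $\sigma=1-c_s^2$, a direct computation gives
\[
\mathrm{Im}\,\tilde\Pi^L_v(\alpha+iq,\tilde\bxi) = \gamma\alpha\bigl[\gamma\alpha D + 2\gamma^2 M\bigr].
\]
Here $D>0$ and $M>0$ follow from $|s|\le q$ and $|v|<1$, so the imaginary part vanishes precisely at the single candidate
\[
\alpha^* := -\frac{2\gamma M}{D}<0.
\]

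The main step is then to show $\mathrm{Re}\,\tilde\Pi^L_v(\alpha^*+iq,\tilde\bxi)\ne 0$. The crucial algebraic identity is
\[
M = \sigma Q + c_s^2 q/\gamma^2,
\]
equivalent to $1-c_s^2 v^2 = \sigma + c_s^2/\gamma^2$. Using this, after multiplication by $D^3/\gamma^2$ the real part at $\alpha^*$ factors as
\[
\sigma(q-vs)\bigl[4\gamma^2 M^2 + Q^2 D^2\bigr] + 4c_s^2\, Q\bigl[M^2 + q^2 D^2/\gamma^2\bigr],
\]
a sum of strictly positive terms since $0<c_s<1$, $|v|<1$, $q>0$, and $|s|\le q$ together give $q-vs>0$, $Q>0$, $M>0$. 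The case $\tilde\lambda = \alpha - i|\tilde\bxi|$ then follows from the conjugation symmetry $\overline{\tilde\Pi^L_v(\tilde\lambda,\tilde\xi_1,\tilde\xi_2,\tilde\xi_3)} = \tilde\Pi^L_v(\bar{\tilde\lambda},-\tilde\xi_1,\tilde\xi_2,\tilde\xi_3)$, applied with the relabeling $\tilde\xi_1 \to -\tilde\xi_1$. The main obstacle is locating the positive-sum factorization above; without the identity $M=\sigma Q + c_s^2 q/\gamma^2$, positivity of the real part at $\alpha^*$ is not manifest, even though numerical experiments suggest it.
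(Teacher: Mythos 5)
Your proposal is correct, but it follows a genuinely different route from the paper's proof. I checked the key ingredients: the identity $r^2+\gamma^2R^2=\gamma^2Q^2$ holds ($\gamma^2(Q^2-R^2)=\gamma^2(1-v^2)(q^2-s^2)=r^2$); the positivity of $D$, $M$, $Q$ and $q-vs$ follows from $|s|\le q$, $|v|<1$; the identity $M=(1-c_s^2)Q+c_s^2q/\gamma^2$ is valid; and the claimed factorization of $D^3\,{\rm Re}\,\tilde\Pi^L_v(\alpha^*+iq,\tilde\bxi)/\gamma^2$ into a sum of positive terms is correct (the degree-$3$ part in $(q,s)$ reduces to $4M^2\bigl[(1-c_s^2)\gamma^2(D-2Q)+c_s^2(D-2q)\bigr]$ with $D-2Q=q-vs$ and $D-2q=Q$, while the degree-$5$ part reduces to $QD^2\bigl[(1-c_s^2)Q(4q-D)+4c_s^2q^2/\gamma^2\bigr]$ with $4q-D=q-vs$). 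Two cosmetic points: the imaginary part is $\alpha\bigl[\gamma\alpha D+2\gamma^2M\bigr]$, not $\gamma\alpha\bigl[\gamma\alpha D+2\gamma^2M\bigr]$ --- harmless, since the unique negative root $\alpha^*=-2\gamma M/D$ is unchanged --- and your local symbol $\sigma=1-c_s^2$ collides with the paper's viscosity coefficient $\sigma$ and should be renamed. The substantive difference from the paper is this: the paper extracts from the vanishing of the imaginary part only the inequality $|p_2(\tilde\alpha)|\le|vp_1(\tilde\alpha)|$, which confines $\tilde\alpha$ to an interval $\mathcal A$ depending on $(c_s,v)$ alone, and then must establish sign conditions for two auxiliary polynomials $A$ and $B$ on all of $\mathcal A$ --- via quadratic-form decompositions in $p_\pm$, resultant computations, a continuity argument in parameter space, and a separately treated degenerate case $c_s^2=2/(3-v^2)$ in which $\mathcal A$ collapses. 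You instead solve the imaginary-part equation exactly for the single admissible root $\alpha^*$ (possible precisely because $D>0$ and $M>0$), so positivity of the real part needs to be checked at only one point and the degenerate case never arises; for $\chi=0$ this is shorter and more elementary. What the paper's heavier interval machinery buys is that it carries over structurally to Section 4, where for $\hat\chi>0$ the imaginary part \eqref{Im_Pi}, after factoring out $\alpha$, is quadratic rather than linear in $\alpha$, so the explicit-root strategy would force one to manipulate square-root expressions and would lose its transparency.
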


To that end we fix
\be\label{fixcv}
c_s^2\in (0,1)\quad\text{and}\quad v\in(-1,1)
\ee
and lead the assumption that for some $\tilde\bxi\in\R^3\setminus\{0\}$
and some $\tilde\alpha<0$
\be\label{widerspruchsannahme}
\tilde\Pi^L_v(\tilde\alpha+iw,\tilde\bxi)=0\quad\text{with }w=|\tilde\bxi|\text{ or }w=-|\tilde\bxi|
\ee
to a contradiction. The idea is to show that on the one hand the imaginary part of equation \eqref{widerspruchsannahme} forces $\tilde \alpha$ to lie in a certain interval $\mathcal A\subset\R$ on which, on the other hand, the real part of \eqref{widerspruchsannahme} cannot hold.

Writing
$$
a=\gamma^2(1-c_s^2),\quad b= \gamma^2(1-c_s^2v^2),
$$
and using $\gamma^2 -1 = \gamma^2 v^2$ and $\gamma^2 (c_s^2-v^2) - c_s^2 = -a v^2$,
we find from \eqref{Pi_L_gen2} that
\begin{align}
{\rm Im}(\tilde\Pi^L_v(\alpha+iw,\tilde\bxi))
&=   \alpha  \big(2a +  \gamma \alpha \big) v\tilde\xi_1 + \alpha w \big( 2b + 3\gamma\alpha \big) \label{IM_PiL} \\
{\rm Re}(\tilde\Pi^L_v(\alpha+iw,\tilde\bxi)) &=
-av^2 \tilde\xi_1^2
- 2 w \big( a + \gamma \alpha \big) v\tilde\xi_1 + b (\alpha^2- w^2) 
+ \gamma \alpha (\alpha^2 - 2 w^2)  + c_s^2 w^2.      \label{RE_PiL}
\end{align}
Introducing
\begin{equation}  \label{0chi_p1&2}
p_1(\alpha)= 2 a + \gamma\alpha
\hspace{1cm} \text{and} \hspace{1cm}
p_2(\alpha)= 2 b + 3\gamma\alpha,
\end{equation}
we write the imaginary part of \eqref{widerspruchsannahme} as
\begin{eqnarray}  \label{0chi_xi_soln}
p_1(\tilde\alpha)v\tilde\xi_1 = - p_2(\tilde\alpha) w.
\end{eqnarray}
\begin{lemma} \label{0chi_adm-reg_lemma}
(a) For any solution of \eqref{widerspruchsannahme}, $\tilde\alpha$ lies in the
interval
$$
\mathcal A=[\min\{\alpha_-,\alpha_+\},\max\{\alpha_-,\alpha_+\}]
$$
with
\be \label{0chi_alpha_+-}
\alpha_\pm  
= \ -2\gamma \frac{1\pm v}{3\pm v}(1\mp vc_s^2).
\ee
(b) $\mathcal{A}$ consists of a single point if and only if
either $v=0$ or
\be \label{0chi_cr}
c_s^2 = 2/(3-v^2),
\ee
and in this case
\be\label{3alphas}
\alpha_- = \alpha_+ = \frac{-2}{\gamma(3-v^2)} \equiv \alpha_*.
\ee
\end{lemma}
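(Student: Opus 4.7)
The argument is driven entirely by the imaginary-part identity \eqref{0chi_xi_soln}, which gives a single algebraic relation between $\tilde\alpha$ and the real quantities $v\tilde\xi_1$ and $w$. Since $\tilde\xi_1\in\R$ and $\tilde\xi_1^2\le|\tilde\bxi|^2=w^2$, squaring \eqref{0chi_xi_soln} and dividing by $w^2\ne 0$ produces the necessary condition $p_2(\tilde\alpha)^2\le v^2\,p_1(\tilde\alpha)^2$ whenever $p_1(\tilde\alpha)\ne 0$. In the degenerate case $p_1(\tilde\alpha)=0$, the identity itself forces $p_2(\tilde\alpha)=0$, so $\tilde\alpha$ is then a common root of $p_1$ and $p_2$; this will turn out to sit inside the single point to which $\mathcal A$ degenerates in case (b), so it causes no trouble.

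For (a), the plan is to factor $p_2(\alpha)^2-v^2p_1(\alpha)^2=(p_2(\alpha)-vp_1(\alpha))(p_2(\alpha)+vp_1(\alpha))$ and recognise each factor as affine in $\alpha$ with positive slope $(3\mp v)\gamma>0$. A short computation using the algebraic identities $b\mp va=\gamma^2(1\mp v)(1\pm vc_s^2)$ identifies the unique zeros of these factors as $\alpha_\mp$ from \eqref{0chi_alpha_+-}. Since the quadratic $p_2^2-v^2p_1^2$ in $\alpha$ has positive leading coefficient $(9-v^2)\gamma^2$, the inequality $p_2^2\le v^2p_1^2$ holds exactly on the closed interval bounded by these two roots, which is $\mathcal A$ by definition.

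For (b), $\mathcal A$ collapses to a point iff $\alpha_+=\alpha_-$. Clearing denominators in \eqref{0chi_alpha_+-} and expanding yields that $\alpha_+-\alpha_-$ is proportional to $v\bigl(c_s^2(3-v^2)-2\bigr)$, so the equality forces either $v=0$ or \eqref{0chi_cr}. In either sub-case, substitution back into \eqref{0chi_alpha_+-}, using $\gamma(1-v^2)=1/\gamma$, simplifies the common value to $-2/(\gamma(3-v^2))=\alpha_*$, which is \eqref{3alphas}.

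I expect no real obstacle here; the entire lemma reduces to the single factorization identity above. The only points that need care are (i) the sign check that both factors $p_2\pm vp_1$ are monotone increasing in $\alpha$, which ensures that the quadratic $p_2^2-v^2p_1^2$ opens upward and the ``between the roots'' conclusion is valid, and (ii) the accounting in the edge case $p_1(\tilde\alpha)=0$, where the simultaneous vanishing of $p_1$ and $p_2$ is consistent with (and in fact selects) the degenerate case $\alpha_+=\alpha_-=\alpha_*$ of (b).
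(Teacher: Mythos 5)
Your proposal is correct and follows essentially the same route as the paper: the imaginary-part relation \eqref{0chi_xi_soln} together with $|\tilde\xi_1|\le|w|$ yields $|p_2(\tilde\alpha)|\le|v\,p_1(\tilde\alpha)|$, which factors through the affine functions $p_\pm=p_2\pm vp_1$ whose roots are $\alpha_\pm$, and part (b) follows by equating $\alpha_+$ and $\alpha_-$. Your explicit squaring and the positive-leading-coefficient observation $(9-v^2)\gamma^2>0$ are just a slightly more pedantic rendering of the paper's ``$\tilde\alpha$ lies between the zeros of $p_\pm$'' step, and they have the small added benefit of handling the case $\tilde\xi_1=0$ (or $p_1(\tilde\alpha)=0$) without comment.
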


\begin{proof}
To prove part (a), observe that \eqref{0chi_xi_soln} and the inequality $0<|\tilde\xi_1| \leq |w|$ implies
\be  \label{0chi_admis_reg}
|p_2(\tilde\alpha)|\le |vp_1(\tilde\alpha)|.
\ee
Inequality \eqref{0chi_admis_reg} holds if and only if $\tilde\alpha$ lies between the zeros of the affine functions
$$
p_+ \equiv p_2 + v p_1\quad \text{and}\quad p_- \equiv p_2 - v p_1
$$
given by
\begin{equation}  \label{0chi_i+-}
p_\pm(\alpha)   =  (3 \pm v) \gamma\alpha + 2 (b\pm va ),
\end{equation}
and these zeros are $\alpha_+$ and $\alpha_-$.
Part (b) directly follows by equating $\alpha_+$ and $\alpha_-$, which implies that either \eqref{0chi_cr} holds or $v=0$.
\end{proof}

Next, multiplying the real part of the dispersion relation \eqref{widerspruchsannahme} by $p_1^2(\alpha)$
and using \eqref{0chi_xi_soln} to substitute for $v \tilde\xi_1 p_1(\tilde\alpha)$ gives
\begin{equation}   \label{0chi_A-B-eqn}
0= A(\tilde\alpha) w^2 + B(\tilde\alpha)p_1^2(\tilde\alpha),
\end{equation}
with
\begin{align} \label{0chi_A&B}
A(\alpha) &\equiv   -a p_2(\alpha)^2 + 2 \big( a + \gamma \alpha \big) p_1(\alpha)p_2(\alpha)
-  (2 \gamma \alpha  + a) p_1(\alpha)^2 ,  \cr
&= 4\gamma^3\alpha^3 + 4b \gamma^2\alpha^2 - 4a c_s^4   \cr
&=  4 \gamma^2 B(\alpha)   - 4a c_s^4    \\
B(\alpha) &\equiv    \big(  b  + \gamma\alpha \big) \alpha^2 ,
\end{align}
as an equation equivalent to the dispersion relation \eqref{widerspruchsannahme}.
The essential idea is now to control the signs of $A$ and $B$ in a way that shows that
\eqref{0chi_A-B-eqn} cannot hold.
This is the subject of the next two lemmas.

\begin{lemma} \label{0chi_lemma_A}
(i) The polynomial $A$ is positive on the interior of $\mathcal{A}$.\\
(ii) $A$ vanishes on $\partial\mathcal A$ if and only if $c_s^2=2/(3-v^2)$, (in which case $\mathcal{A}$ is a singleton).
\end{lemma}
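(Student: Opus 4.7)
The plan is to factor $A(\alpha)$ as a product of a linear and a quadratic polynomial, evaluate both factors in closed form at the endpoints $\alpha_\pm$, and then control the signs of the two factors on $\mathcal{A}$. The decisive observation is the identity $b - a = \gamma^2 c_s^2(1-v^2) = c_s^2$ (equivalently $a + c_s^2 = b$), whose use in a direct expansion verifies
\[
A(\alpha) \;=\; 4\bigl(\gamma\alpha + c_s^2\bigr)\,q(\alpha),\qquad q(\alpha) := \gamma^2\alpha^2 + a\gamma\alpha - ac_s^2.
\]

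I would then substitute $\alpha=\alpha_\pm$ in each factor. Starting from $\gamma\alpha_\pm = -2(b\pm va)/(3\pm v)$ and the elementary simplification $b \pm va = \gamma^2(1\pm v)(1\mp vc_s^2)$, a short computation yields
\[
\gamma\alpha_\pm + c_s^2 \;=\; \frac{c_s^2(3-v^2)-2}{(3\pm v)(1\mp v)}.
\]
A parallel but more laborious expansion of $q(\alpha_\pm)$ produces a polynomial in $c_s^2$ and $v$ that factors cleanly as $\bigl(c_s^2(3-v^2)-2\bigr)(3c_s^2+1)$, whence
\[
q(\alpha_\pm) = \frac{\gamma^2\bigl(c_s^2(3-v^2)-2\bigr)(3c_s^2+1)}{(3\pm v)^2},\qquad A(\alpha_\pm) = \frac{4\gamma^2\bigl(c_s^2(3-v^2)-2\bigr)^2(3c_s^2+1)}{(1\mp v)(3\pm v)^3}.
\]
Since $v\in(-1,1)$ and $c_s^2>0$, all non-squared factors are strictly positive, so $A(\alpha_\pm)\ge 0$ with equality iff $c_s^2(3-v^2)=2$; this gives part (ii).

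For part (i), assume the non-critical case (otherwise $\mathcal{A}$ is a singleton with empty interior). Let $\sigma := \operatorname{sgn}\bigl(c_s^2(3-v^2)-2\bigr)\ne 0$; the formulas above show that both $\gamma\alpha_\pm + c_s^2$ and $q(\alpha_\pm)$ have sign $\sigma$. Since $\gamma\alpha + c_s^2$ is monotone in $\alpha$, it has constant sign $\sigma$ on $\mathcal{A}$. The quadratic $q$ opens upward with $q(0)=-ac_s^2<0$, so its two real roots straddle $0$; as $\mathcal{A}\subset(-\infty,0)$, the only root $q$ could cross inside $\mathcal{A}$ is its negative one, which is excluded because $q(\alpha_\pm)$ share the same nonzero sign $\sigma$. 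Hence $q$ also has sign $\sigma$ throughout $\mathcal{A}$, and $A=4(\gamma\alpha+c_s^2)q(\alpha)$ is strictly positive on $\mathcal{A}$, in particular on its interior. The principal technical hurdle is the algebraic simplification producing the factor $\bigl(c_s^2(3-v^2)-2\bigr)(3c_s^2+1)$ in $q(\alpha_\pm)$ — it is not obvious a priori why such a tidy form should appear — but once this is established, the sign-chasing is immediate.
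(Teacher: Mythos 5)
Your proof is correct, and it takes a genuinely different route from the paper's. You factor $A(\alpha)=4\,(\gamma\alpha+c_s^2)\,q(\alpha)$ with $q(\alpha)=\gamma^2\alpha^2+a\gamma\alpha-ac_s^2$ -- the linear factor encoding the root $-c_s^2/\gamma$ that the paper also records -- and the key identity $b-a=\gamma^2c_s^2(1-v^2)=c_s^2$ does make this expansion reproduce $4\gamma^3\alpha^3+4b\gamma^2\alpha^2-4ac_s^4$. The closed-form endpoint evaluations
\[
\gamma\alpha_\pm+c_s^2=\frac{c_s^2(3-v^2)-2}{(3\pm v)(1\mp v)},\qquad
q(\alpha_\pm)=\frac{\gamma^2\bigl(c_s^2(3-v^2)-2\bigr)(3c_s^2+1)}{(3\pm v)^2}
\]
check out, so both factors carry the sign of $c_s^2(3-v^2)-2$ at $\partial\mathcal A$, which gives (ii) at once; and your sign-propagation step is sound, since $\mathcal A\subset(-\infty,0)$, the linear factor is monotone, and $q(0)=-ac_s^2<0$ forces the two roots of $q$ to straddle the origin, so equal nonzero endpoint signs exclude the negative root from $\mathcal A$. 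The paper argues quite differently: it writes $4v^2A$ as a quadratic form in $p_\pm$, rules out zeros on $\partial\mathcal A$ via a common-root (resultant-type) computation, and then obtains the sign on $\mathcal A$ by a homotopy in the parameters $(c_s^2,v)$ anchored at the limiting values $(0,0)$ and $(1,0)$, using that the real roots of $A$ cannot cross $\partial\mathcal A$ away from the critical surface. Your computation is more elementary and delivers the vanishing condition $c_s^2=2/(3-v^2)$ together with strict positivity in one stroke, with explicit formulas for $A(\alpha_\pm)$; what the paper's quadratic-form/continuity scheme buys is that it is the template reused in Section 4 for $\hat\chi>0$, where $A$ has degree four and such closed-form endpoint evaluation would be impractical.
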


\begin{proof}
We write $\tilde A(\alpha) \equiv  4v^2  A(\alpha)$ as the quadratic form
\begin{eqnarray} \label{0chi_tA}
\tilde A =   r_1 {p_+}^2     +  2 r_2 p_+ p_-    + r_3  {p_-}^2   ,
\end{eqnarray}
in $p_+,p_-$ with coefficients        
\begin{eqnarray} \nonumber
r_1(\alpha) &=&   - (1-v) \big((1-v) a + 2\gamma \alpha\big)      \cr
r_2(\alpha)&=&      (1-v^2) a  +2\gamma \alpha  \cr
r_3(\alpha) &=&    - (1+v) \big((1+v) a + 2\gamma \alpha\big)   ,
\end{eqnarray}
by substituting $p_2= \frac12(p_+ + p_-)$ and $vp_1=\frac12(p_+ - p_-)$  into \eqref{0chi_A&B}, and using the identities $v^2 \gamma^2 = \gamma^2 -1$, $\gamma^{-2}=1-v^2$.

We now investigate whether $\tilde A$ vanishes at a boundary point of $\mathcal A$. It clearly does vanish wherever both $p_+$ and $p_-$ vanish simultaneously; this is only the case when $c_s^2=2/(3-v^2)$, at the point displayed in equation \eqref{3alphas}. Assume now that
\be
c_s^2\neq \frac{2}{3-v^2} \quad \text{and} \quad  v \neq 0.
\ee
We first note that neither $p_+$ and $r_3$ nor $p_-$ and $r_1$ have common zeros. For assuming, say,
that $p_+$ and $r_3$ did, i.e.,
\begin{equation}
\alpha_+ \equiv  -2\gamma \frac{1+v}{3+v}(1-vc_s^2) \ = \
\alpha_3 \equiv  \frac{c_s^2 - 1}{2 \gamma (1-v)}     ,
\end{equation}
or equivalently
\begin{eqnarray}
(3+v)(c_s^2 - 1)   =   -4 \gamma^2 (1- v^2)(1- v c_s^2) ,
\end{eqnarray}
yields the contradiction that $c_s^2 = -\frac13$.
It now follows from \eqref{0chi_tA} that $\tilde A$ does not vanish at the boundary points of $\mathcal
A$ unless they coincide. In the special case $v=0$,
when by \eqref{3alphas} $\mathcal{A}$ collapses to the single point $\alpha_\pm = - 2/3$,
we find that $A(-2/3)= 16/27 - 4ac_s^4$, which vanishes if and only if $c_s^2 = 2/3$,
i.e., when \eqref{0chi_cr} holds. This proves assertion (ii).

To show (i), note that the third order polynomial $A$  satisfies
\be
A(-c_s^2/\gamma)=0,\quad A(0)<0,\quad\text{and} \quad A(+\infty)>0
\ee
and thus has both a negative and a positive root, and all roots of $A$ are real. By this, since $A$ has no zeros on $\partial A$ for
\be\label{notsep}
c_s^2\neq \frac{2}{3-v^2},
\ee
it follows that $A$ is positive on $\mathcal A$ for \emph{all}\ values of the parameters
\eqref{fixcv} as soon as it is positive on  $\mathcal A$ for one pair $(c_s^2,v)$ with
$c_s^2<2/(3-v^2)$ as well as for one pair with $c_s^2>2/(3-v^2)$. Indeed, the limiting cases
$(0,0)$ and $(1,0)$ have this property.   To see this, recall that  for $v=0$,  $\mathcal A$ is the singleton $\{-2/3\}$ and
\begin{equation}
A(\alpha) =  4 \big( \alpha^3 +  \alpha^2 - (1-c_s^2) c_s^4 \big),
\end{equation}
and thus on $\mathcal A$, $A$ assumes the value
\begin{equation}
{A}(-2/3)  =   \frac{16}{27} - 4(1-c_s^2) c_s^4  =  \frac{16}{27}>0
\end{equation}
both for $c_s^2=0$ and $c_s^2=1$. This completes the proof.
\end{proof}

\begin{lemma}  \label{0chi_lemma_B}
The polynomial $B$ is positive on $\mathcal{A}$.
\end{lemma}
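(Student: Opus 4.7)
The plan is direct. Since $B(\alpha) = (b + \gamma\alpha)\alpha^2$ already displays its roots, it suffices to check throughout $\mathcal{A}$ that (i) $\alpha \neq 0$ and (ii) $b + \gamma\alpha > 0$. Both checks reduce to inspecting the endpoints $\alpha_\pm$ given by \eqref{0chi_alpha_+-}.

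For (i), the formula $\alpha_\pm = -2\gamma\frac{1\pm v}{3\pm v}(1 \mp v c_s^2)$ exhibits every factor as strictly positive under the standing assumptions $v\in(-1,1)$ and $c_s^2\in(0,1)$; in particular $|vc_s^2|<1$ handles the last factor. Hence $\alpha_\pm<0$ strictly, so $\mathcal{A}\subset(-\infty,0)$ and $\alpha^2>0$ throughout $\mathcal{A}$.

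For (ii), since $\alpha\mapsto b+\gamma\alpha$ is affine, its minimum over the closed interval $\mathcal{A}$ is attained at an endpoint, so I only need to verify $b+\gamma\alpha_\pm>0$. I would then exploit the symmetry $v\mapsto -v$ -- which fixes $b$ and $\gamma$ while swapping $\alpha_+$ and $\alpha_-$ -- to reduce this to the single inequality $b+\gamma\alpha_+>0$. After clearing the positive denominator $3+v$, this becomes
\[
(3+v)(1 - c_s^2 v^2) - 2(1+v)(1 - v c_s^2) > 0,
\]
which I expect, after expansion and pulling out $1-v$ (using $2-v-v^2=(1-v)(2+v)$), to factor as
\[
(1-v)\bigl[(1 - c_s^2) + c_s^2(v+1)^2\bigr].
\]
Both factors are manifestly positive for $v\in(-1,1)$ and $c_s^2\in(0,1)$, which finishes the argument.

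I anticipate no real obstacle. The only step requiring genuine computation is the factorization above; once that is in hand, the rest is an inspection of signs, and the symmetry reduction $v\mapsto -v$ keeps the algebra to a minimum by cutting the number of cases in half.
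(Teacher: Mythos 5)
Your proof is correct, but it takes a genuinely different route from the paper's. The paper disposes of this lemma in one line by invoking the identity $4\gamma^2 B(\alpha)=A(\alpha)+4ac_s^4$ from \eqref{0chi_A&B} together with Lemma~\ref{0chi_lemma_A}: since $A>0$ on $\mathcal A$ and $4ac_s^4>0$, positivity of $B$ is inherited. You instead prove the statement directly from the explicit factorization $B(\alpha)=(b+\gamma\alpha)\alpha^2$: the endpoint formula \eqref{0chi_alpha_+-} shows $\mathcal A\subset(-\infty,0)$, and the affine factor $b+\gamma\alpha$ is minimized at an endpoint, where (using the $v\mapsto-v$ symmetry that swaps $\alpha_\pm$) it reduces to the single inequality $(3+v)(1-c_s^2v^2)-2(1+v)(1-vc_s^2)=(1-v)\bigl[(1-c_s^2)+c_s^2(1+v)^2\bigr]>0$; I have checked this factorization and it is exact. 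What each approach buys: the paper's argument is essentially free once the (substantially harder) Lemma~\ref{0chi_lemma_A} is in hand, whereas yours is self-contained and elementary, making no use of the sign of $A$ at all — a useful independent cross-check. Note, though, that your independence runs only one way: since $A=4\gamma^2B-4ac_s^4$ subtracts a positive constant, your $B>0$ does not yield $A>0$, so Lemma~\ref{0chi_lemma_A} still requires its own separate proof.
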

\begin{proof}
By \eqref{0chi_A&B}, $B$ differs from $A$ by addition with a positive constant,  $4\gamma^2 B(\alpha)   = A(\alpha) + 4a c_s^4$. Thus, since by Lemma \ref{0chi_lemma_A} $A$ is positive on $\mathcal{A}$, $B$ is positive on $\mathcal{A}$ as well.
\end{proof}

In view of equation \eqref{0chi_A-B-eqn}, Lemmas \ref{0chi_lemma_A} and \ref{0chi_lemma_B}
readily imply the assertion of Proposition 2 except in the case \eqref{0chi_cr}.
In this case, in which $\alpha_+ = \alpha_- = \alpha_*$ by Lemma \ref{0chi_adm-reg_lemma},
we find from \eqref{0chi_p1&2}
that $\alpha_*$ is also a common zero of $p_1$ and $p_2$.
This implies by \eqref{0chi_xi_soln} that the imaginary part of the dispersion relation
\eqref{widerspruchsannahme} vanishes for $\talpha = \alpha_*$ and we cannot exactly
argue as above.

\begin{lemma}\label{sepcaseforchi=0}
In case $c_s^2=2/(3-v^2)$, one finds ${\rm Re}(\tilde\Pi_v^L(\alpha_*+iw,\tilde\bxi))\neq 0$.
\end{lemma}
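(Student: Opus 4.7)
The plan is to substitute $\tilde\alpha=\alpha_*$ directly into the explicit real-part formula \eqref{RE_PiL} and to exploit the algebraic simplifications forced by the critical relation $c_s^2 = 2/(3-v^2)$. With $a=\gamma^2(1-c_s^2)$ and $b=\gamma^2(1-c_s^2 v^2)$ as in the rest of the section, one checks that in this case $a=1/(3-v^2)$, and consequently the three identities
$$
\gamma\alpha_* = -2a, \qquad b = 3a, \qquad c_s^2 = 2a
$$
hold simultaneously. The first expresses $p_1(\alpha_*)=0$, the second follows by combining this with $p_2(\alpha_*)=0$, and the third is just a restatement of the hypothesis in terms of $a$.

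I would then insert these three identities into \eqref{RE_PiL}. The coefficient of $\alpha_*^2$ collapses from $b-2a$ to $a$, the coefficient of $w^2$ collapses from $-b+4a+c_s^2$ to $3a$, and the factor $a+\gamma\alpha_*$ multiplying the cross term $-2wv\tilde\xi_1$ simplifies to $-a$. Completing the square on the remaining $\tilde\xi_1$-terms then rewrites the whole expression as
$$
{\rm Re}\bigl(\tilde\Pi^L_v(\alpha_*+iw,\tilde\bxi)\bigr)\;=\;a\,\bigl[\,4w^2-(w-v\tilde\xi_1)^2\,\bigr]\;+\;a\,\alpha_*^2.
$$

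The conclusion then follows from the two elementary bounds $|v|<1$ and $|\tilde\xi_1|\le|\tilde\bxi|=|w|$: the triangle inequality gives $|w-v\tilde\xi_1|\le(1+|v|)|w|\le 2|w|$, so the bracketed quantity is nonnegative, while $a>0$ and $\alpha_*\neq 0$ make the trailing term $a\alpha_*^2$ strictly positive. Hence the real part is strictly positive, in particular nonzero, contradicting \eqref{widerspruchsannahme} as required.

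I do not anticipate any substantive obstacle beyond the clean bookkeeping of the three identities above. What makes the argument close so neatly is the precise cancellation in the critical case: $b=3a$ and $c_s^2=2a$ conspire to make the coefficient of $w^2$ exactly $4a$, matching the crude triangle-inequality upper bound on $a(w-v\tilde\xi_1)^2$ and leaving the manifestly positive $a\alpha_*^2$ as a safety margin. Had any of these identities differed by even a small factor, a finer polynomial-positivity argument in the spirit of Lemmas~\ref{0chi_lemma_A} and \ref{0chi_lemma_B} would be needed.
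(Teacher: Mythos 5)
Your proof is correct, and it takes a genuinely different --- and more self-contained --- route than the paper's. The three identities $\gamma\alpha_*=-2a$, $b=3a$, $c_s^2=2a$ are exactly what the critical relation forces, and your reduction of \eqref{RE_PiL} at $\tilde\alpha=\alpha_*$ to
$$
{\rm Re}\bigl(\tilde\Pi^L_v(\alpha_*+iw,\tilde\bxi)\bigr)=a\bigl[\,4w^2-(w-v\tilde\xi_1)^2\,\bigr]+a\,\alpha_*^2
$$
checks out term by term; with $|\tilde\xi_1|\le|w|$ and $|v|<1$ the bracket is in fact strictly positive (it is at least $(1-|v|)(3+|v|)\,w^2$), so the real part is strictly positive and in particular nonzero. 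The paper instead recycles the machinery of the generic case: it notes that $p_1$, $p_2$ and $A$ share the zero $\alpha_*$, replaces \eqref{0chi_A-B-eqn} by the divided relation $0=\hat A(\alpha_*)w^2+B(\alpha_*)$ with $\hat A\equiv A/p_1^2$, and invokes Lemma \ref{0chi_lemma_B} for the positivity of $B(\alpha_*)$. Your approach buys something real here: at $\alpha_*$ the imaginary-part relation \eqref{0chi_xi_soln} degenerates to $0=0$ and carries no information about $\tilde\xi_1$, so the elimination step that produced \eqref{0chi_A-B-eqn} is not actually available, and the real part genuinely retains its $\tilde\xi_1$-dependence. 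Your completed square confronts that surviving dependence head-on and disposes of it uniformly via $|\tilde\xi_1|\le|w|$, whereas the paper's one-line reduction glosses over this point (and the sign of $\hat A(\alpha_*)=\lim_{\alpha\to\alpha_*}A(\alpha)/p_1(\alpha)^2$ needs more care than the text suggests). Same conclusion, but your argument is the more transparent and airtight of the two.
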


\begin{proof}
In this case, $p_1,p_2$ and $A$ have the common zero $\alpha_*$ and $\hat A\equiv A/p_1^2$
satisfies $\hat A(\alpha_*)>0$. Instead of \eqref{0chi_A-B-eqn} one uses
$$
0=\hat A(\alpha_*)w^2+B(\alpha_*)
$$
which is a contradiction by Lemma \ref{0chi_lemma_B}.   
\end{proof}

\subsection{Transverse modes}
The purpose of this subsection is to prove that there are no transverse luminal modes:       
\begin{prop}\label{Prop_nonlum_T}
$\tilde\Pi^T_v(\alpha\pm i|\tilde\bxi|,\tilde\bxi)\neq 0$ for all
$\alpha<0$ and $\tilde\bxi\in\R^3\setminus\{0\}$.
\end{prop}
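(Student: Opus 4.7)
The plan is to adapt the strategy used for the longitudinal factor in Proposition~\ref{Prop_nonlum_L}: extract an admissibility interval $\mathcal{A}^T$ for $\alpha$ from the vanishing of $\mathrm{Im}\,\tilde\Pi^T_v$, and then rule out zeros of the real part on that interval.

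Setting $\tilde\lambda=\alpha+iw$ with $\alpha<0$ and $w=\pm|\tilde\bxi|$ and writing $x\equiv\tilde\xi_1$, the Lorentz invariance $\lambda^2+\bxi^2=\tilde\lambda^2+\tilde\bxi^2=\alpha^2+2i\alpha w$ (which follows from $w^2=|\tilde\bxi|^2$), combined with $\lambda=\gamma\alpha+i\gamma(w+vx)$, simplifies \eqref{Pi_T_gen2} to
\begin{align*}
\mathrm{Re}\,\tilde\Pi^T_v&=\gamma\alpha+\gamma^2\alpha^2(\hat\sigma-v^2)-(\hat\sigma-1)\gamma^2(w+vx)^2,\\
\mathrm{Im}\,\tilde\Pi^T_v&=\gamma\bigl[w\,q_2(\alpha)+vx\,q_1(\alpha)\bigr],
\end{align*}
with the affine polynomials $q_1(\alpha)=1+2(\hat\sigma-1)\gamma\alpha$ and $q_2(\alpha)=1+2(\hat\sigma-v^2)\gamma\alpha$. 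The imaginary equation together with $|x|\le|w|$ forces $|q_2(\alpha)|\le|v\,q_1(\alpha)|$, and an analysis mirroring Lemma~\ref{0chi_adm-reg_lemma} identifies the admissibility interval $\mathcal{A}^T$ with endpoints $\alpha_\pm^T=-1/(2\gamma(\hat\sigma\mp v))$, both strictly negative since $\hat\sigma\ge 1>|v|$. Solving the imaginary equation for $vx$ (when $q_1(\alpha)\ne 0$) and using $q_1-q_2=-2\alpha/\gamma$ yields $w+vx=-2w\alpha/(\gamma q_1)$, which upon substitution into the real equation together with the identity $1+\gamma\alpha(\hat\sigma-v^2)=(q_2+1)/2$ collapses to the single relation
\[
\gamma\,q_1(\alpha)^2\bigl(q_2(\alpha)+1\bigr)=8(\hat\sigma-1)\,w^2\alpha.
\]

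Since $\alpha<0$ and $\hat\sigma\ge 1$, the right-hand side is non-positive. The crux is then the strict inclusion $\mathcal{A}^T\subset\bigl(-1/(\gamma(\hat\sigma-v^2)),\,0\bigr)$, which reduces via the definition of $\alpha_\pm^T$ to the elementary inequality $\hat\sigma>2|v|-v^2=|v|(2-|v|)$, valid for all $|v|<1$ thanks to $\hat\sigma\ge 1$. This inclusion forces $q_2(\alpha)+1>0$ throughout $\mathcal{A}^T$, so the left-hand side of the collapsed equation is strictly positive whenever $q_1(\alpha)\ne 0$; the simultaneous vanishing $q_1(\alpha)=q_2(\alpha)=0$ would require $v^2=1$ and is therefore excluded. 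The remaining subcases are straightforward: $v=0$ reduces to the rest-frame statement from \cite{FT3}; $x=0$ (equivalently $q_2(\alpha)=0$ with $q_1\ne 0$) pins $\alpha=-1/(2\gamma(\hat\sigma-v^2))$ and a direct evaluation yields $\mathrm{Re}\,\tilde\Pi^T_v=-1/(4(\hat\sigma-v^2))-(\hat\sigma-1)\gamma^2w^2<0$; and when $\hat\sigma=1$ the collapsed equation degenerates to $\alpha(\alpha+\gamma)=0$, whose only negative root $\alpha=-\gamma$ lies outside $\mathcal{A}^T$ because $\gamma>1/(2\gamma(1-|v|))$ for $|v|<1$. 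The main obstacle I foresee is verifying the strict inclusion of $\mathcal{A}^T$ above, which is precisely where the structural hypothesis $\hat\sigma=\sigma/\eta\ge 1$ (equivalently $\check\zeta\ge -\eta/3$) enters and closes the argument.
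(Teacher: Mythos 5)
Your proof is correct and takes essentially the same route as the paper's: the same polynomials $q_1,q_2$, the same admissibility interval with endpoints $\alpha_\pm=-1/(2\gamma(\hat\sigma\mp v))$ extracted from $|q_2|\le|vq_1|$, and the same substitution of the imaginary part into the real part. Your collapsed relation $\gamma q_1^2(q_2+1)=8(\hat\sigma-1)w^2\alpha$ is the paper's equation $A(\talpha)w^2+B(\talpha)=0$ rearranged (divide by $\alpha/2$), and your sign argument ($q_2+1>0$ on the admissibility interval versus a non-positive right-hand side) is equivalent to the paper's observation that the zeros of $B$ lie outside $\mathcal A$ while $A\le 0$.
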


\begin{proof}
We fix $c_s^2\in (0,1)\quad\text{and}\quad v\in(-1,1)$  and lead the assumption that for some $\tilde\bxi\in\R^3\setminus\{0\}$ and some $\tilde\alpha<0$
\be\label{wid_ann_T}
\tilde\Pi_v^T(\tilde\alpha+iw,\tilde\bxi)=0\quad\text{with} \quad w=|\tilde\bxi|\text{ or }w=-|\tilde\bxi|
\ee
to a contradiction, following the idea of proof for Proposition \ref{Prop_nonlum_L} above. That is, in a first step we show the imaginary part of equation \eqref{wid_ann_T} forces $\tilde \alpha$ to lie again in some range $\mathcal A\subset\R$, and in the second step we show that the real part of \eqref{wid_ann_T} cannot hold on $\mathcal A$.  To begin, we substitute the ansatz $\tilde\lambda = \talpha + i w$ into the expression for $\tilde\Pi^T$ in \eqref{Pi_T_gen2}, which gives
\begin{align} \nonumber
\tilde\Pi_v^T(\tilde\lambda,\tilde\bxi)
&= \gamma(\talpha+i(w+v\tilde\xi_1))+\hat\sigma\gamma^2\big(\talpha+i(w+v\tilde\xi_1)\big)^2    +   \gamma^2 \big(-i v\talpha+(vw+\tilde\xi_1)\big)^2+ w^2 - \tilde\xi_1^2.
\end{align}
From this we find that the real and imaginary parts of condition \eqref{wid_ann_T} are given by
\begin{align}
\label{ImPiT}
0&=\text{Im\ \!}\tilde\Pi_v^T (\tilde\alpha+iw,\tilde\bxi)=\gamma(w+v\tilde\xi_1)
+2\hat\sigma\gamma^2\talpha(w+v\tilde\xi_1)
-2\gamma^2 v\talpha(vw+\tilde\xi_1),\\
\label{RePiT}
0&=\text{Re\ \!}\tilde\Pi_v^T (\tilde\alpha+iw,\tilde\bxi)
 = \big( 1 + (\hat\sigma-v^2)\gamma\talpha \big) \gamma \talpha + \gamma^2 (1-\hat\sigma)\big(w^2 + 2 wv \tilde\xi_1 + v^2 \tilde\xi_1^2 \big).
\end{align}
For the first step, we write \eqref{ImPiT} as
\begin{equation}\label{k}
 0=q_1(\talpha) v \tilde\xi_1+q_2(\talpha)w
\end{equation}
for
\begin{eqnarray}
q_1(\alpha) \equiv  1+2\gamma\alpha(\hat\sigma-1)
\hspace{1cm} \text{and} \hspace{1cm}
q_2(\alpha) \equiv   1+2\gamma\alpha(\hat\sigma-v^2).
\end{eqnarray}
Since $|\tilde\xi_1| \leq |w|$, equation \eqref{k} implies that
\be
\talpha \in   \mathcal A \equiv \{\alpha:\: |q_2(\alpha)|\le |vq_1(\alpha)|\}
\ee
with $\mathcal A$ the interval between the
zeros of
$$
q_\pm=q_2\pm vq_1= \big(2\gamma (\hat\sigma\mp v)\alpha +1 \big)(1\pm v),
$$
i.e.,
$$
\mathcal A=[\min\{\alpha_-\alpha_+\},\max\{\alpha_-\alpha_+\}]\quad\text{with }
\alpha_\pm = -\frac{1}{2\gamma (\hat\sigma \mp v)} <0.
$$

For the second step, we multiply \eqref{RePiT} by
$q_1^2$ and, after substitution of \eqref{k}, write the resulting expression as
\begin{equation}\label{Aw^2+B=0}
0=A(\talpha)w^2+B(\talpha)
\end{equation}
with                
\begin{eqnarray} \nonumber
A(\alpha) &=& -4 (\hat\sigma-1) \alpha^2  \quad   \leq 0, \quad \text{as } \hat\sigma \geq 1,     \cr
B(\alpha)  &=& q_1(\alpha)^2 \big(1 + (\hat\sigma -v^2)\gamma \alpha \big)\gamma \alpha .
\end{eqnarray}
As the zeros $0$, $\alpha_0$ and $\alpha_1$ of $B$ lie outside of $\mathcal{A}$,
$$
\alpha_0 = -\frac{1}{\gamma (\hat\sigma-v^2)} < \alpha_\pm < \alpha_1 = -\frac{1}{2\gamma(\hat\sigma-1)} <0,
$$ 
equation \eqref{Aw^2+B=0} is violated on $\mathcal A$, implying the sought after contradiction.
\end{proof}

\section{ Subluminality in presence of heat conduction}
\setcounter{equation}0
We turn to fluids allowing for heat conduction, $\chi>0$.
As heat conduction does not influence linear transverse
waves -- formally: $\chi$ does not appear in $\Pi_v^T$ --, it is obvious that  Proposition
\ref{Prop_nonlum_T} keeps valid.
The proof of Theorem \ref{ThmC} will thus be completed once we have shown
\begin{prop} \label{nonlum_L_chi>0}
The assertion of Proposition  \ref{Prop_nonlum_L} remains true for $\chi>0$.
\end{prop}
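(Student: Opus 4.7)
The plan is to generalize the proof of Proposition \ref{Prop_nonlum_L} to $\hat\chi>0$. Suppose for contradiction that $\tilde\Pi^L_v(\tilde\alpha+iw,\tilde\bxi)=0$ for some $\tilde\alpha<0$, $\tilde\bxi\in\R^3\setminus\{0\}$, and $w=\pm|\tilde\bxi|$. Using the identity $\tilde\lambda^2+\tilde\bxi^2=\tilde\alpha^2+2i\tilde\alpha w$, valid precisely because $w^2=|\tilde\bxi|^2$, I would substitute into \eqref{Pi_L_gen2} and separate into real and imaginary parts.

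Dividing the imaginary part by the nonzero factor $\tilde\alpha$, one finds
\[
0= w\,P_2(\tilde\alpha) + v\tilde\xi_1\,P_1(\tilde\alpha),
\]
with
\[
P_1(\tilde\alpha)=\gamma(1+\hat\chi)\tilde\alpha+2a,\qquad P_2(\tilde\alpha)=4\hat\chi\tilde\alpha^2+3\gamma(1+\hat\chi)\tilde\alpha+2b,
\]
a natural $\hat\chi$-extension of equation \eqref{0chi_xi_soln}; note $P_2$ is now quadratic. The bound $|\tilde\xi_1|\le|w|$ confines $\tilde\alpha$ to $\mathcal A=\{\alpha:P_+(\alpha)P_-(\alpha)\le 0\}$ with $P_\pm=P_2\pm vP_1$. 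Each $P_\pm$ is a quadratic with positive leading coefficient, value $P_\pm(0)=2\gamma^2(1\pm v)(1\mp c_s^2v)>0$ at the origin, and (as a short calculation based on $(3\pm v)^2>8(1\pm v)(1\mp c_s^2v)$ shows) strictly positive discriminant; hence it has two distinct negative real roots, and $\mathcal A$ is a union of one or two intervals in $(-\infty,0)$.

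Multiplying the real part of the dispersion relation by $P_1^2$ and eliminating $v\tilde\xi_1$ via $v\tilde\xi_1 P_1=-wP_2$ yields, after straightforward algebra, the equation
\[
0 = C(\tilde\alpha)\,P_1(\tilde\alpha)^2 - \mathcal D(\tilde\alpha)\,w^2,
\]
with
\[
C(\tilde\alpha)=\tilde\alpha^2\bigl(\hat\chi\tilde\alpha^2+\gamma(1+\hat\chi)\tilde\alpha+b\bigr),\qquad \mathcal D(\tilde\alpha)=f\,P_1^2-2g\,P_1P_2+a\,P_2^2,
\]
where $f=4\hat\chi\tilde\alpha^2+2\gamma(1+\hat\chi)\tilde\alpha+a$ and $g=\gamma(1+\hat\chi)\tilde\alpha+a$. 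At $\hat\chi=0$ these reduce to $B$ and $-A$ of Lemmas \ref{0chi_lemma_A} and \ref{0chi_lemma_B}. The sought contradiction then follows provided $C\cdot\mathcal D<0$ on $\mathcal A$, with the loci where either vanishes treated separately in the spirit of Lemma \ref{sepcaseforchi=0}.

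The main obstacle is this last sign analysis, which is more delicate than in Section \ref{Sec_long_modes_chi0} because both $C$ and $\mathcal D$ may change sign across the two components of $\mathcal A$: $C$ depending on where $\tilde\alpha$ sits relative to the two negative roots of the quadratic $Q(\alpha)=\hat\chi\alpha^2+\gamma(1+\hat\chi)\alpha+b$ (whose discriminant $\gamma^2[(1-\hat\chi)^2+4\hat\chi c_s^2v^2]$ is nonnegative, vanishing only at $\hat\chi=1,v=0$), and $\mathcal D$ through its indefinite quadratic-form structure. The plan for $\mathcal D$ is to extend the Lemma \ref{0chi_lemma_A} strategy: rewrite $4v^2\mathcal D$ as a quadratic form in $(P_+,P_-)$ with $\tilde\alpha$-dependent coefficients, and evaluate it on the boundary $\partial\mathcal A=\{P_+=0\}\cup\{P_-=0\}$. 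Continuity in the parameters $(\hat\chi,v,c_s^2)$ from the two known limits --- $\hat\chi=0$ (Proposition \ref{Prop_nonlum_L}) and $v=0$ (fluid rest frame, where $\mathcal A$ collapses to the two roots of $P_2$ and the correct sign configuration can be checked by direct substitution) --- combined with the rigidity provided by $P_\pm(0)>0$, is expected to propagate the required opposite-sign configuration throughout the parameter range, reducing the full verification to a finite list of boundary configurations that can be settled by explicit computation or by resultant arguments.
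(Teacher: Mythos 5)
Your setup reproduces the paper's argument exactly: your $P_1,P_2$ are the $p_1,p_2$ of \eqref{p1_&_p2}, your admissible set is \eqref{admis_reg}, and your relation $0=C(\talpha)P_1(\talpha)^2-\mathcal{D}(\talpha)w^2$ is \eqref{Aw^2+B} with $C=B$ and $\mathcal{D}=-A$ from \eqref{A}--\eqref{B}. The genuine gap is that the sign analysis you defer to the last paragraph is not a technicality but the entire substance of the proof (the paper's Lemmas \ref{Lemma_admissibility-region}--\ref{Lemma_B}), and one of the two anchors you propose for the continuity argument does not work. As $\hat\chi\to 0^+$ each quadratic $P_\pm$ has one root tending to the finite $\chi=0$ root and one root tending to $-\infty$ (the product of roots is $(b\pm va)/(2\hat\chi)\to\infty$ while the sum tends to $-\infty$); consequently the component $\mathcal{A}^-$ of the admissible set leaves every compact set in that limit, and the $\hat\chi=0$ picture -- a single interval -- carries no information about the signs of $C$ and $\mathcal{D}$ on $\mathcal{A}^-$ for small $\hat\chi>0$. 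The paper instead anchors the continuity argument at the \emph{interior} parameter point $(\hat\chi,v)=(1,0)$, where the roots of $P_\pm$ are explicitly $-1$ and $-1/2$ and the signs can be read off.

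Two further ingredients are missing from your plan and are needed to make the continuity argument rigorous. First, you must identify exactly where the sign configuration can break down as the parameters vary, namely where $C$ or $\mathcal{D}$ acquires a zero on $\partial\mathcal{A}$: this requires the explicit resultant computations (showing that the quadratic-form coefficients of $4v^2\mathcal{D}$ have no common zeros with $P_\mp$, and likewise for $C$) \emph{and} the identification of the coincidence locus $\Gamma(c_s^2,\hat\chi,v)=0$ of \eqref{cr}, on which $\mathcal{A}^+$ collapses to the common zero $\alpha_*$ of $P_1$ and $P_2$, the factor $P_1^2$ in your equation vanishes, and the generic argument degenerates -- this case must be treated separately as in Lemma \ref{sepcaseforchi>0}. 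Second, you overlook the identity $\mathcal{D}=k-4\gamma^2 C$ with the constant $k=4ac_s^4((\hat\chi-1)^2+4\hat\chi c_s^2)^{-1}>0$ (the paper's \eqref{A}); this ties the two signs together (e.g.\ $C\le 0$ forces $\mathcal{D}>0$, and $\mathcal{D}<0$ forces $C>0$), halving the work and guaranteeing that the configuration needed to contradict $CP_1^2=\mathcal{D}w^2$ on each component is consistent. As it stands, your proposal is a correct restatement of the paper's framework with the decisive lemmas left unproved and one proposed route to them blocked.
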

The rest of the section is devoted to proving this. In principle we proceed as in section 3.1. 
We begin by fixing the parameters
$$
c_s^2 \in (0,1),  \hspace{1cm} \hat\chi = c_s^2\chi/\sigma >0  \hspace{1cm} \text{and} \hspace{1cm} v\in (-1,1),
$$
assume for contradiction there exist $\talpha< 0$ and $\tilde\bxi \in \R^3\setminus \{0\}$
such that \eqref{widerspruchsannahme} holds, and show analogues of Lemmas \ref{0chi_adm-reg_lemma}
to \ref{sepcaseforchi=0}.
This time,
\begin{align}
{\rm Im}(\tilde\Pi_v^L(\alpha+iw,\bxi))
&=  4 \hat\chi  w  \alpha^3  + \gamma \alpha^2 (\hat\chi +1) \big( 3 w+ v \tilde\xi_1  \big)  + 2 \gamma^2 \alpha (w+ v \tilde\xi_1)  -     2 c_s^2  \gamma^2  \alpha v (wv+ \tilde\xi_1)   \label{Im_Pi}   \\
{\rm Re}(\tilde\Pi_v^L(\alpha+iw,\bxi)))
&=   \hat\chi (\alpha^2  -4 w^2)\alpha^2
+ \gamma \alpha (\hat\chi +1) \big(\alpha^2 -2w(w+ v \tilde\xi_1)\big)
+ \gamma^2 \big(\alpha^2 - (w+ v \tilde\xi_1)^2\big)  \cr
& \ + c_s^2\big\{ r^2 - \gamma^2 \big(\alpha^2 v^2 - (wv+ \tilde\xi_1)^2\big) \big\}.    \label{Re_Pi}
\end{align}
Note first that equation \eqref{Im_Pi} is equivalent to
\begin{equation}  \label{xi_general_chi}
v p_1(\talpha) \tilde\xi_1 = - w  p_2(\talpha)
\end{equation}
with now
\begin{equation} \label{p1_&_p2}
p_1(\alpha) = (\hat\chi+1)\gamma \alpha + 2a
\hspace{1cm} \text{and} \hspace{1cm}
p_2(\alpha) =  4 \hat\chi\alpha^2 + 3(\hat\chi+1)\gamma \alpha + 2b,
\end{equation}
where $a = \gamma^2(1-c_s^2)$ and $b  = \gamma^2(1-c_s^2v^2)$ as before. The constraint
$w^2\ge\tilde\xi_1^2$ again implies that $\talpha$ must lie in the set
\be \label{admis_reg}
\mathcal{A} \equiv \{\alpha:|p_2(\alpha)|\le|vp_1(\alpha)|\},
\ee
whose boundary points are the roots of the polynomials
$p_\pm \equiv p_2 \pm v p_1$ given by
\begin{equation} \label{p+_&_p-}
p_\pm(\alpha) =4 \hat\chi\alpha^2 + (3\pm v)(\hat\chi+1)\gamma \alpha + 2(b\pm va).
\end{equation}

\begin{lemma}  \label{Lemma_admissibility-region}
\textbf{(i)} The two polynomials $p_+$ and $p_-$ each have two distinct zeros,
$\alpha_\pm^- < \alpha_\pm^+$. They have a common zero if and only if either $v=0$,
in which case they
have both zeros in common,  $\alpha_-^+=\alpha_+^+$ and $\alpha_-^-=\alpha _+^-$ , or        
\begin{eqnarray} \label{cr}
0 = \Gamma(c_s^2,\hat\chi,v)   \equiv   8 \hat\chi a^2 + \gamma^2 (b-3a)(\hat\chi+1)^2,
\end{eqnarray}
in which case $\alpha_-^+=\alpha_+^+$, while $\alpha_-^-\neq\alpha _+^-$.  \\
\textbf{(ii)} $\mathcal{A} = \mathcal{A}^- \cup \mathcal{A}^+$ with two disjoint compact intervals
$\mathcal A^- < \mathcal A^+$ having $\partial \mathcal{A^\pm} = \{\alpha_-^\pm,\alpha_+^\pm\}$. \\
\textbf{(iii)}  If $v=0$, the two intervals $\mathcal A^-$ and $\mathcal A^+$ are singletons, i.e., as
$v\rightarrow 0$ each interval $\mathcal A^\pm$ collapses continuously to a single point.
These points are the two distinct zeros of $p_+=p_-$. \\
\textbf{(iv)} For $v\neq 0$, $\mathcal A^-$ always has non-empty interior, while $\mathcal A^+$
collapses to a single point if and only if \eqref{cr} holds.
\end{lemma}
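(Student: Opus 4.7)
The defining inequality $|p_2|\le|vp_1|$ for the admissibility region is equivalent, by squaring, to $p_+p_- \le 0$, so $\mathcal{A}=\{\alpha\in\R:p_+(\alpha)p_-(\alpha)\le 0\}$. Since each $p_\pm$ is a quadratic in $\alpha$ with positive leading coefficient $4\hat\chi$, its negativity set is a compact interval $I_\pm=[\alpha_\pm^-,\alpha_\pm^+]$, and $\mathcal{A}$ is the closure of the symmetric difference $I_+\triangle I_-$. All four parts of the lemma thus reduce to the position and coincidence patterns of the four roots $\alpha_\pm^\pm$.

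My first step is to check distinctness of the roots of each $p_\pm$. Using $b\pm va=\gamma^2(1\pm v)(1\mp vc_s^2)$, the discriminant reads $D_\pm=\gamma^2\bigl[(3\pm v)^2(\hat\chi+1)^2-32\hat\chi(1\pm v)(1\mp vc_s^2)\bigr]$; the AM--GM bound $(\hat\chi+1)^2\ge 4\hat\chi$ reduces positivity of $D_\pm$ to the elementary inequality $(3\pm v)^2>8(1\pm v)(1\mp vc_s^2)$, whose left--right difference is affine in $c_s^2$ with endpoint values $(1\mp v)^2$ and $(1\pm 3v)^2$, both non-negative and not simultaneously vanishing in the open parameter range $|v|<1$, $c_s^2\in(0,1)$. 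Negativity of both roots follows from Vieta on $p_\pm$.

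Next I would characterize the common zeros. A common zero $\alpha^*$ satisfies $p_+(\alpha^*)+p_-(\alpha^*)=2p_2(\alpha^*)=0$ and $p_+(\alpha^*)-p_-(\alpha^*)=2vp_1(\alpha^*)=0$. For $v=0$, $p_+=p_-=p_2$ and both roots coincide, which is the first branch of (i). For $v\neq 0$, $p_1(\alpha^*)=0$ forces $\alpha^*=\alpha_0\equiv -2a/((\hat\chi+1)\gamma)$, and direct substitution gives $p_2(\alpha_0)=2\Gamma/((\hat\chi+1)^2\gamma^2)$, so the common zero exists iff $\Gamma=0$. The main obstacle lies in the remaining claim of (i) that in this degenerate case $\alpha_0$ is the \emph{larger} rather than the smaller root of both $p_\pm$. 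The natural criterion is the sign of $p_\pm'(\alpha_0)$: positive derivative puts $\alpha_0$ past the vertex. A direct computation reduces $\operatorname{sgn}p_+'(\alpha_0)$ to $\operatorname{sgn}(c_s^2(2v+3)-1)$, which is not manifest from $\Gamma=0$ alone. The fix is to rewrite $\Gamma=0$ as $8\hat\chi/(\hat\chi+1)^2=(2-c_s^2(3-v^2))/(1-c_s^2)^2$ and note that $8\hat\chi/(\hat\chi+1)^2\le 2$ (maximum at $\hat\chi=1$) forces the necessary condition $c_s^2\ge(1+v^2)/2$; the inequality $(1+v^2)/2>1/(2v+3)$ then reduces to $(v+1)(2v^2+v+1)>0$, which is transparent for $v>-1$. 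This, together with the symmetric $v\mapsto-v$ argument for $p_-$ (noting $\Gamma$ depends on $v^2$), gives $p_\pm'(\alpha_0)>0$ and completes (i).

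Finally, parts (ii)--(iv) are geometric consequences. Since $p_\pm(\alpha_0)=p_2(\alpha_0)=2\Gamma/((\hat\chi+1)^2\gamma^2)$, the point $\alpha_0$ lies in $I_+\cap I_-$ whenever $\Gamma\le 0$. In the complementary range $\Gamma>0$ I would argue by continuity from $v=0$ (where $I_+=I_-$): the intersection $I_+\cap I_-$ could shrink to empty only through a boundary coincidence of a root of $p_+$ with one of $p_-$, i.e.\ a common zero, which by (i) occurs only at $\Gamma=0$ and only as the larger--with--larger collision $\alpha_-^+=\alpha_+^+$; this collision merely moves the right endpoint of $I_+\cap I_-$ to $\alpha_0$ without emptying the intersection. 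Hence $I_+\cap I_-$ is always a non-degenerate proper sub-interval of $I_+\cup I_-$, so $\mathcal{A}$ splits as two disjoint compact intervals whose boundaries pair up as smaller--with--smaller and larger--with--larger, giving (ii). Assertion (iii) is immediate from $p_+=p_-$ at $v=0$, which reduces $\mathcal{A}$ to $\{p_2=0\}$, a two-point set, with continuous collapse as $v\to 0$. Assertion (iv) reads off (i): for $v\neq 0$ the smaller roots can never coincide, so $\mathcal{A}^-$ always has interior, while the larger roots coincide iff $\Gamma=0$, in which case $\mathcal{A}^+$ collapses to a point.
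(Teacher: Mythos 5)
Your proposal is correct, and its skeleton coincides with the paper's: positivity of the discriminants of $p_\pm$ via $(\hat\chi+1)^2\ge 4\hat\chi$, and the observation that a common zero of $p_+$ and $p_-$ is a common zero of $p_1$ and $p_2$, hence equals $\alpha_0=-2a/((\hat\chi+1)\gamma)$ with $p_2(\alpha_0)$ a positive multiple of $\Gamma$ (your normalization $p_2(\alpha_0)=2\Gamma/((\hat\chi+1)^2\gamma^2)$ is in fact the correct one; the factor printed in \eqref{zero_p1} is a typo, though only the sign matters). Where you genuinely add value is on the two assertions the paper states without argument: that the common zero at $\Gamma=0$ is the \emph{larger} root of each $p_\pm$, and the interleaving $\alpha_-^-,\alpha_+^-<\alpha_-^+,\alpha_+^+$ of \eqref{4alphas}. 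Your derivative test $\mathrm{sgn}\,p_\pm'(\alpha_0)=\mathrm{sgn}\big(c_s^2(3\pm 2v)-1\big)$ under $\Gamma=0$, combined with the constraint $c_s^2\ge(1+v^2)/2$ extracted from $8\hat\chi/(\hat\chi+1)^2\le 2$ and the factorization $(1+v^2)(3\pm 2v)-2=(1\pm v)(2v^2\pm v+1)>0$, checks out and closes the first gap; your continuity argument from $v=0$ (the overlap of $I_+$ and $I_-$ can only be destroyed through a smaller-with-larger root collision, which would be a common zero and hence, by the first point, impossible) closes the second. Both approaches then read (iii) and (iv) off the root configuration in the same way.

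One cosmetic imprecision: $\mathcal{A}=\{p_+p_-\le 0\}$ equals $(I_+\cup I_-)\setminus(\operatorname{int}I_+\cap\operatorname{int}I_-)$ rather than the closure of $I_+\triangle I_-$; in the degenerate cases $v=0$ and $\Gamma=0$ the two differ by isolated points ($\mathcal{A}^+$ or both $\mathcal{A}^\pm$ become singletons that do not lie in the closure of the symmetric difference). Since you treat these cases separately and singletons are still compact intervals, none of your conclusions are affected, but the opening sentence of your plan should be adjusted accordingly.
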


\begin{proof}
Part (i) directly follows from the discriminant $\Delta$ of $p_\pm$,
\begin{equation} \nonumber
\Delta= \gamma^2(3\pm v)^2(\hat\chi+1)^2-32\hat\chi(b\pm va),
\end{equation}
being strictly positive. This can be seen by using $(\hat\chi+1)^2/{\hat\chi}\geq 4$
to estimate
$$
\frac{b\pm va }{\gamma^2 (3\pm v)^2}<\frac18\le \frac{(\hat\chi+1)^2}{32\hat\chi}.
$$
Thus $p^+$ and $p^-$ both have two distinct zeros.
Clearly, if $v=0$, then $p_+=p_-=p_2$ and thus $\alpha_-^+=\alpha_+^+$ and $\alpha_-^-=\alpha _+^-$.
Assume now that $v\neq 0$ and $p^+$ and $p^-$, thus equivalently $p_1$ and $p_2$,
have a common zero. But the only zero of $p_1$ is
\begin{equation}    \label{zero_p1}
\alpha_* \equiv -\frac{2a}{\gamma(\hat\chi+1)} =\alpha^+_-=\alpha^+_+,
\quad \quad \text{and } \quad
p_2(\alpha_*) = 2\frac{(\hat\chi+1)^2}{\gamma^6} \Gamma(c_s^2,\hat\chi,v).
\end{equation}
Thus for $v\neq 0$, $\alpha_-^+=\alpha_+^+$ if and only if
\eqref{cr} holds while always $\alpha_-^-\neq\alpha_+^-$, and in any case
\be\label{4alphas}
\alpha_-^-,\alpha_+^-<\alpha_-^+,\alpha_+^+.
\ee

To see part (ii), note first that
as $|p_2(\alpha)| > |vp_1(\alpha)|$ for sufficiently large $|\alpha|$,
$\mathcal{A}$ is bounded.
The assertion thus follows from the fact $\partial \mathcal A=\{\alpha_-^-,\alpha_+^-,\alpha_-^+,\alpha_+^+\}$
and the above inequalities and equalities holding between these four numbers.

Parts (iii) and (iv) are direct consequences of (i) and (ii).
\end{proof}

Using that the imaginary part \eqref{Im_Pi} of the dispersion relation
vanishes for $\alpha=\tilde\alpha$,
we evaluate its real part \eqref{Re_Pi}
as
\begin{equation} \label{Aw^2+B}
A(\talpha) w^2 + B(\talpha) p_1^2(\talpha)   =0,
\end{equation}
in terms of the polynomials
\begin{align}
A(\alpha) &\equiv  \ -  a\: p_2(\alpha)^2  + 2 \big(\gamma \alpha (\hat\chi +1) +a \big) \: p_2(\alpha) p_1(\alpha)  - \big(4 \hat\chi \alpha^2  + 2 \gamma \alpha (\hat\chi +1) + a\big) p_1(\alpha)^2   \cr
& = \ 4\gamma^2 B(\alpha) - k    \label{A} \\
B(\alpha) &\equiv  \  \big(\hat\chi \alpha^2  + (\hat\chi +1) \gamma\alpha + b \big) \alpha^2 \label{B} 
\end{align}
where $k \equiv 4 a c_s^4 \big( (\hat\chi-1)^2 + 4\hat\chi c_s^2 \big)^{-1}>0$.  
\begin{lemma}  \label{Lemma_A}
The polynomial $A$ has the following properties:\\
(i) $A$ has a common zero with $p_+$ or $p_-$ if and only if the coincidence relation \eqref{cr} holds.\\
(ii) $A < 0$ on $\mathcal{A}^-$ and $A> 0$ on $\mathcal{A}^+$ in the generic case that \eqref{cr} does not
hold.\\
(iii)If \eqref{cr} holds, $A < 0$ on $\mathcal{A}^-$ and $A=0$ on $\mathcal{A}^+=\{\alpha_-^+=\alpha_+^+\}$.
\end{lemma}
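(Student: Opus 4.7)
The plan is to follow the template of Lemma \ref{0chi_lemma_A}, adapted to the two-component geometry of $\mathcal A=\mathcal A^-\cup\mathcal A^+$ established in Lemma \ref{Lemma_admissibility-region}. The first step is to rewrite $A$ as a quadratic form in $(p_+,p_-)$: substituting $p_2=\tfrac12(p_++p_-)$ and $vp_1=\tfrac12(p_+-p_-)$ into the defining expression \eqref{A} yields a representation of the form
\[4v^2 A(\alpha)=R_1(\alpha)p_+(\alpha)^2+2R_2(\alpha)p_+(\alpha)p_-(\alpha)+R_3(\alpha)p_-(\alpha)^2,\]
for explicit polynomial coefficients $R_1,R_2,R_3$ of degree at most two in $\alpha$ (arising from the degree-two factors multiplying $p_1^2$ and $p_2p_1$ in the definition of $A$).

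For part (i), this representation reduces the common-zero question to the boundary $\partial\mathcal A$: on $\partial\mathcal A^+$, where $p_+$ vanishes, one has $4v^2 A=R_3p_-^2$, and on $\partial\mathcal A^-$, where $p_-$ vanishes, one has $4v^2 A=R_1p_+^2$. Hence $A$ shares a zero with $p_+$ exactly when either $p_-$ also vanishes there---equivalent by Lemma \ref{Lemma_admissibility-region} to the coincidence relation \eqref{cr}---or $R_3$ and $p_+$ share a root, and symmetrically for $p_-$. The essential algebraic task is to exclude the latter alternative (common roots of $R_3,p_+$, or of $R_1,p_-$), by assuming such a common zero and using \eqref{p1_&_p2} to derive, via an explicit resultant, an algebraic constraint on $(c_s^2,\hat\chi,v)$ that is incompatible with the physical parameter ranges---in direct analogy with the $p_+\leftrightarrow r_3$ non-coincidence argument in the proof of Lemma \ref{0chi_lemma_A}.

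Granted (i), $A$ is continuous and nonvanishing throughout each $\mathcal A^\pm$ (except in the degenerate sub-case of (iii) where $\mathcal A^+$ collapses to a common zero of $p_+$ and $p_-$, and hence of $A$), so its sign is constant on each component and may be determined by continuity in $v$. I would anchor the sign evaluation at $v=0$, where by Lemma \ref{Lemma_admissibility-region}(iii) each $\mathcal A^\pm$ is a singleton located at one of the two distinct negative roots of $p_2$. At such a root $\alpha_0$, the identity $p_2(\alpha_0)=0$ collapses \eqref{A} to $A(\alpha_0)=-f(\alpha_0)p_1(\alpha_0)^2$, where $f(\alpha)\equiv 4\hat\chi\alpha^2+2\gamma\alpha(\hat\chi+1)+a$. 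An elementary check---evaluating $p_2$ at the unique zero of $f$ and using $(\hat\chi+1)^2\ge 4\hat\chi$ together with $c_s^2<1$---shows that this zero of $f$ lies strictly between the two roots of $p_2$, so $f$, and therefore $-fp_1^2$, takes opposite signs at them. Matching the ordering $\mathcal A^-<\mathcal A^+$ then fixes $A<0$ on $\mathcal A^-$ and $A>0$ on $\mathcal A^+$ at $v=0$, and continuity in $v$ combined with (i) propagates these signs throughout the admissible range. Because $\mathcal A^-$ retains nonempty interior even under \eqref{cr}, part (iii) follows immediately.

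The main obstacle is the elimination in (i): unlike the $\chi=0$ case, where the analogous coefficients $r_i$ were affine in $\alpha$, here $R_1$ and $R_3$ are quadratic and the parameter space is three-dimensional, so one must show that the resultant cuts out \emph{exactly} the single exceptional locus \eqref{cr} and nothing else. The sign determination at $v=0$, by contrast, should reduce to inequalities of the same character as those already exploited in the proof of Lemma \ref{Lemma_admissibility-region}(i), and is expected to be routine.
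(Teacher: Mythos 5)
Your strategy coincides with the paper's: part (i) via the representation of $4v^2A$ as a quadratic form in $(p_+,p_-)$ plus a resultant computation, and parts (ii)--(iii) via a sign evaluation at an anchor in parameter space followed by continuation. Your anchor --- the whole slice $v=0$ with arbitrary $\hat\chi$, using $A=-fp_1^2$ at the zeros of $p_2$ --- is a reasonable variant of the paper's single anchor $(\hat\chi,v)=(1,0)$, and the sign pattern you claim there is in fact correct. But three genuine gaps remain.

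First, in (i) the entire content is the non-vanishing of $\mathrm{res}(R_3,p_+)$ and $\mathrm{res}(R_1,p_-)$ over the full parameter range, and you leave this undone while correctly identifying it as the main obstacle; the paper finds $\mathrm{res}(A^\pm,p^\mp)=-4\hat\chi\bigl((1+3c_s^2)\hat\chi^2-2(2c_s^4+c_s^2+1)\hat\chi+(1+3c_s^2)\bigr)$, whose quadratic factor in $\hat\chi$ has discriminant proportional to $c_s^2(c_s^2-1)<0$ and hence never vanishes for $\hat\chi>0$. Second, the inference ``granted (i), $A$ is nonvanishing throughout each $\mathcal A^\pm$'' is a non sequitur: (i) controls $A$ only at the zeros of $p_\pm$, i.e.\ at the four boundary points of $\mathcal A$, and says nothing about zeros of $A$ in the \emph{interior} of $\mathcal A^\pm$, which is exactly what could flip the sign during the continuation. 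You need the extra step the paper supplies: locate all four real roots of $A$ outside $\mathcal A$ at the anchor and argue that a root can only enter a component by crossing its boundary, which (i) forbids off the locus \eqref{cr}. Third, the sign computation at $v=0$ as stated does not close: $f$ is a quadratic with \emph{two} real zeros (its discriminant $4\gamma^2\bigl((\hat\chi+1)^2-4\hat\chi(1-c_s^2)\bigr)$ is positive since $(\hat\chi+1)^2\ge4\hat\chi$), and one zero of $f$ lying between the roots of $p_2$ does not force a sign change of $f$ there. What is needed is $f(\alpha^-)f(\alpha^+)<0$ at the two roots $\alpha^\mp$ of $p_2$; this is true and follows from Vieta, namely $f(\alpha^-)f(\alpha^+)\le-(1+3c_s^2)+(1+c_s^2)^2=-c_s^2(1-c_s^2)<0$ using $(\hat\chi+1)^2\ge4\hat\chi$, but your stated route does not reach it. Note also that $A=-fp_1^2$ loses its sign information exactly where $p_1$ vanishes at a root of $p_2$, which is precisely the locus \eqref{cr}; this is consistent with (iii) but must be said. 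A final small imprecision: each $\partial\mathcal A^\pm$ consists of one zero of $p_+$ and one zero of $p_-$, not of two zeros of $p_\pm$; this does not affect the logic of (i), which only requires evaluating the quadratic form at zeros of $p_+$ and of $p_-$ separately.
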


\begin{proof}
To prove (i), we express $A$ as a quadratic form in $p^\pm \equiv p_2 \pm v p_1$.
Writing \eqref{A} as
$ \label{A_quad-form_0}
A  = a_1 p_1^2 + 2 a_2 p_1 p_2 + a_3 p_2^2
$
with
$
a_1 = - \big(4 \hat\chi \alpha^2  + 2 (\hat\chi +1) \gamma \alpha + a\big),
a_2 =  (\hat\chi +1) \gamma \alpha + a,
a_3 =   -  a,
$
and substituting now $vp_1 = (p_+ - p_-)/2$ and $p_2 =(p_+ + p_-)/2$ gives
\begin{equation}   \label{A_quad_form}
4v^2 A = A^+ (p^+)^2 + 2 C p^+ p^- + A^- (p^-)^2,
\end{equation}
with
\begin{eqnarray}
A^\pm =  a_1 \pm 2v a_2 +v^2 a_3
\hspace{.8cm} \text{and} \hspace{.8cm}
C = v^2 a_3 - a_1 .
\end{eqnarray}
This directly shows that $A$ vanishes either at a common zero of $p^+$ and $p^-$ --  whose existence is equivalent to \eqref{cr} by Lemma \ref{Lemma_admissibility-region} (i) --, or at common zeros of $A^\pm$ and $p^\mp$. However, neither $A^+$ and $p^-$, nor $A^-$ and $p^+$ have common zeros, since the resultants
$$
\begin{aligned}
\text{res}(A^\pm,p^\mp)&=\left|
\begin{matrix}
-4\hat\chi&2(\pm v-1)(\hat\chi+1)\gamma&-a(1\mp v)^2&0\\
0&-4\hat\chi&2(\pm v-1)(\hat\chi+1)\gamma&-a(1\mp v)^2\\
4\hat\chi&(3\mp v)(\hat\chi+1)\gamma&2(b\mp va)&0\\
0&4\hat\chi&(3\mp v)(\hat\chi+1)\gamma&2(b\mp va)
\end{matrix}
\right|\\
&=-4\hat\chi\left((1+3c_s^2)\hat\chi^2-2(2c_s^4+c_s^2+1)\hat\chi+(1+3c_s^2)\right)
\end{aligned}
$$
do not vanish for $\hat\chi >0$. This proves (i).

We now prove part (ii). For this we first verify (ii) in the special case
that $\hat\chi=1,v=0,c_s^2\neq 1/2$. We then extend this to the general parameter case by a continuity argument. So assume
$\hat\chi=1,v=0,c_s^2\neq \frac12$.  In this case,
$$
A(\alpha)= 4  B(\alpha)-k\quad\text{with }
B(\alpha)=(\alpha^2+2\alpha+1)\alpha^2
\text{ and }
k=c_s^2(1-c_s^2),$$
and
$$
p^\pm(\alpha)   = 4\alpha^2 + 6 \alpha + 2=2(\alpha+1)(2\alpha+1).
$$
At the zeros of $p^\pm$, we find that  
$$
A(-1)=-k< 0
\hspace{1cm} \text{and}  \hspace{1cm}
A(-1/2)=1/4-c_s^2(1-c_s^2)>0.
$$
Since for $v=0$ the intervals $\mathcal{A}^-$ and  $\mathcal{A}^+$ collapse to the two zeros of
$p^+\equiv p^-$, we showed that $A$ is strictly negative on $\mathcal{A}^-=\{-1\}$ and strictly
positive on $\mathcal{A}^+=\{-\frac12\}$. This proves (ii) in the special case $\hat\chi=1,v=0$
for $c_s^2\neq 1/2$.

For $(\hat\chi,v)=(1,0)$ and $c_s^2\neq1/2$, the shape of $A$  is thus as follows:
As $A$ is of degree 4 with $A(\pm\infty)>0$, negative on $\mathcal A^-$, positive on $\mathcal A^+$
and has $A(0)=-k<0$, it has four real zeros, namely
one zero to the left of $\mathcal{A}^-$, a second one between $\mathcal{A}^-$ and
$\mathcal{A}^+$, a third one between $\mathcal{A}^+$ and $0$, and a forth in $(0,\infty)$.
Now, according to (i), this whole situation, i.e., the signs of $A$ on $\mathcal A^\pm$ and
the number and positioning of real roots of $A$
relative to $\mathcal A^-,\mathcal A^+,\{0\}$
is robust against changes of the parameters $(c_s^2,\hat\chi,v)$ in
$X=(0,1)\times(0,\infty)\times(-1,1)$ as long as these avoid the
surface $\Gamma_0=\{\Gamma=0\}\subset X$. As $\Gamma_0$ intersects the line segment
$(0,1)\times\{(1,0)\}$ exactly at the point $(c_s^2,1,0)$ with $c_s^2=1/2$,
the situation, in particular the sign behavior of $A$ on $\mathcal A^\pm$,
thus continues to all parameter values $(c_s^2,\hat\chi,v)\in X\setminus \Gamma_0$,
in either of the two connected components $\Gamma_0$ partitions $X\setminus \Gamma_0$.

(iii) For parameter values with $\Gamma(c_s^2,\hat\chi,v)=0$, the interval $\mathcal{A}^+$ collapses to a
single point. By \eqref{A_quad_form}, $A$ vanishes at this point because it is a common zero of $p^-$
and $p^+$.
\end{proof}

\begin{lemma} \label{Lemma_B}
The polynomial $B$ has the following properties:\\
(i) $B$ has a common zero with $p^+$ or $p^-$ if and only if $\hat\chi =1$ and $v=0$.\\
(ii) $B \leq 0$ on $\mathcal{A}^-$ and $B >0$ on $\mathcal{A}^+$.
\end{lemma}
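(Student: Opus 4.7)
The plan is to mirror the template of Lemma \ref{Lemma_A}: exploit the factorization $B(\alpha)=\alpha^2 Q(\alpha)$ with $Q(\alpha)\equiv\hat\chi\alpha^2+(\hat\chi+1)\gamma\alpha+b$, invoke the identity $4\gamma^2 B=A+k$ from \eqref{A} where convenient, and analyze $Q$ on $\partial\mathcal{A}^\pm$ by linearizing $p^\pm$ modulo $Q$.

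For (i), I would first note that $p^\pm(0)=2\gamma^2(1\mp v)(1\pm vc_s^2)>0$, so any common zero of $B$ and $p^\pm$ must be a common zero of $Q$ and $p^\pm$. Since $p^\pm-4Q$ is affine in $\alpha$, such a common zero is forced to occur at
\begin{equation*}
\alpha_*^\pm\equiv -\frac{2\gamma(1\pm vc_s^2)}{\hat\chi+1}.
\end{equation*}
The plan is then to substitute $\alpha_*^\pm$ into $Q=0$ and show, using the identities $b\mp va=\gamma^2(1\mp v)(1\pm vc_s^2)$ and $(\hat\chi+1)^2-4\hat\chi c_s^2=(\hat\chi-1)^2+4\hat\chi(1-c_s^2)$, that the resulting polynomial condition rearranges to
\begin{equation*}
F_\pm\equiv-(\hat\chi-1)^2\,R_\pm-4\hat\chi v^2c_s^2(1-c_s^2)=0,\qquad R_\pm\equiv 1\pm 2vc_s^2+v^2c_s^2.
\end{equation*}
Since $R_\pm$ is a quadratic in $v$ with positive leading coefficient and discriminant $4c_s^2(c_s^2-1)<0$, it is strictly positive on $(-1,1)$; hence $F_\pm\leq 0$, with equality iff $\hat\chi=1$ and $v=0$. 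Sufficiency is handled by the direct verification that at this critical point $Q=(\alpha+1)^2$ and $p^\pm=2(\alpha+1)(2\alpha+1)$ visibly share the zero $\alpha=-1$.

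For (ii), $B>0$ on $\mathcal{A}^+$ falls out immediately from \eqref{A}: by Lemma \ref{Lemma_A}, $A\geq 0$ on $\mathcal{A}^+$, so $4\gamma^2 B=A+k\geq k>0$. For the harder assertion $B\leq 0$ on $\mathcal{A}^-$, I would first observe that $0\notin\mathcal{A}$ (since $b-|v|a=\gamma^2(1-|v|)(1+|v|c_s^2)>0$), reducing the claim to $Q\leq 0$ on $\mathcal{A}^-$; since $Q$ is convex, this in turn reduces to checking $Q\leq 0$ at the two endpoints $\alpha_+^-,\alpha_-^-\in\partial\mathcal{A}^-$. The same linearization as in (i) yields, at any zero $\alpha$ of $p^\pm$,
\begin{equation*}
4Q(\alpha)=(1\mp v)(\hat\chi+1)\gamma\,(\alpha-\alpha_*^\pm),
\end{equation*}
so $Q(\alpha)\leq 0$ iff $\alpha\leq\alpha_*^\pm$. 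As $p^\pm$ is a convex quadratic, this is equivalent to $p^\pm(\alpha_*^\pm)\leq 0$, and a short direct substitution would give the gratifying identity $p^\pm(\alpha_*^\pm)=4\gamma^2 F_\pm/(\hat\chi+1)^2$, so the sign information on $F_\pm$ already established in (i) closes the argument.

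The main obstacle is the patient algebraic manipulation needed to uncover the cancellation $F_\pm=-(\hat\chi-1)^2 R_\pm-4\hat\chi v^2c_s^2(1-c_s^2)$; once this clean form is in hand, the sign of $F_\pm$ is manifest and, pleasingly, both parts of the lemma fall out of the \emph{same} quantity. I expect the trickiest sub-step to be keeping track of signs and cross-terms when expanding $p^\pm(\alpha_*^\pm)$ to verify the identity $p^\pm(\alpha_*^\pm)=4\gamma^2 F_\pm/(\hat\chi+1)^2$.
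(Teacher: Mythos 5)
Your proposal is correct, and for the harder half of (ii) it takes a genuinely different route from the paper. For (i) the two arguments are close cousins: the paper eliminates the quadratic terms between $p^\pm$ and $\tilde B$ and imposes a $2\times 2$ determinant (resultant) condition, which it rewrites as $(\hat\chi+1)^2/(4\hat\chi)=(1+c_s^2v)^2/(1+v(v+2)c_s^2)$ and kills by a two-sided inequality; you instead use the affine remainder $p^\pm-4Q$ to pin the putative common zero at the explicit point $\alpha_*^\pm$ and evaluate $Q$ there, arriving at $F_\pm=-(\hat\chi-1)^2R_\pm-4\hat\chi v^2c_s^2(1-c_s^2)$ (I checked this identity; it is right). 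The real divergence is in (ii): the paper proves $B\le 0$ on $\mathcal{A}^-$ only at $v=0$ and then extends to all parameters by a connectedness/continuity argument, using (i) as the certificate that no sign change can occur; you give a direct, parameter-uniform argument, reducing to the endpoints of $\mathcal{A}^-$ by convexity of $Q$ and using the identity $p^\pm(\alpha_*^\pm)=4Q(\alpha_*^\pm)=4\gamma^2F_\pm/(\hat\chi+1)^2\le 0$ to place the smaller roots of $p^\pm$ to the left of $\alpha_*^\pm$, where $Q\le 0$. Your version buys self-containedness and an explicit quantification of when equality occurs, with both parts of the lemma flowing from the single quantity $F_\pm$; the paper's version is shorter on the page because it recycles the homotopy machinery already deployed for Lemma \ref{Lemma_A}. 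Two trivial nits: your displayed value of $p^\pm(0)$ has the roles of $\pm v$ swapped (it should be $2\gamma^2(1\pm v)(1\mp vc_s^2)$ — immaterial, since both expressions are positive), and ``equivalent to $p^\pm(\alpha_*^\pm)\le 0$'' should read ``implied by'': convexity gives only the one direction, which is the one you use.
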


\begin{proof}
(i) By symmetry of $B$ and $p_\pm$ with respect of $\pm v$, it suffices to
consider only $p_+$,
\begin{equation}  \label{no_common_zeros_eqn1}
p_+(\alpha)  = 4 \hat\chi\alpha^2 + (3+ v)(\hat\chi+1) \gamma \alpha  + 2(b + va).
\end{equation}
As $B(\alpha)=\alpha^2 \tilde B(\alpha)$ with
\be\label{no_common_zeros_eqn2}
\tilde B(\alpha) = \hat\chi \alpha^2  + (\hat\chi +1) \gamma  \alpha   + b
\ee
and $p_+(0)\neq 0$, we are thus looking for common zeros
of $p_+$ and $\tilde B$.
Since
\begin{eqnarray} \label{no_common_zeros_eqn2a}
(p^+ - 4 \tilde B)(\alpha)
&=& (v-1)(\hat\chi+1) \gamma \alpha  - 2(b -va) ,\cr
(b p^+ - 2(b + va) \tilde B)(\alpha)
&=& 2(b -va)  \hat\chi\alpha^2  + \big(b + v(b - 2a) \big)(\hat\chi+1) \gamma \alpha,
\end{eqnarray}
the existence of a common zero would imply that
\begin{eqnarray} \nonumber
\left|\begin{array}{cc}
(v-1)(\hat\chi+1) \gamma & - 2(b -va) \cr  2(b -va)  \hat\chi  &   \big(b + v(b - 2a) \big)(\hat\chi+1) \gamma
\end{array}\right|=0
\end{eqnarray}
or, equivalently,
\begin{equation}  \label{no_common_zeros_eqn3}
 \frac{(\hat\chi+1)^2}{4\hat\chi}=  \frac{(1+c_s^2v)^2}{1+v(v+2)c_s^2}.
\end{equation}
Note now that the left hand side of \eqref{no_common_zeros_eqn3} is  $\ge 1$
with equality holding only for $\hat\chi=1$.
On the other hand, as
\be \nonumber
(1+ c_s^2v)^2 - \big(1+v(v+2)c_s^2\big) = v^2 c_s^2 (c_s^2 -1)\le 0\quad\text{with ``='' only for }v=0,
\ee
 the expression on the right hand side of \eqref{no_common_zeros_eqn3} is
$\le 1$ with equality holding only for $v=0$.

Regarding (ii), obviously $B>0$ on $\mathcal A^+$ since $4\gamma^2 B=A+k$ with $k>0$ by \eqref{A}.
We prove $B\le 0$ on $\mathcal A^-$ first in the special case $v=0$ and then treat the
general case $v\in (-1,1)$ by a continuity argument. Assume that $v=0$ and $\hat\chi\neq 1$
(as the case $v=0$ and $\hat\chi=1$ is already covered by assertion (i)).
By (iii) of Lemma \ref{Lemma_admissibility-region},
$\mathcal{A}^-$ consists of the single point $\alpha^-\equiv\alpha_-^-=\alpha_+^-$, and $\tilde B$, now given
by
\begin{equation} \nonumber
\tilde B (\alpha) = \hat\chi \alpha^2  +   (\hat\chi +1) \alpha   + 1,
\end{equation}
has the zeros
$-{1}/{\hat\chi}$ and $ -1$, at which
\begin{eqnarray} \label{B_sign_eqn1}
p_+\left(-\frac{1}{\hat\chi}\right) = \frac{1}{\hat\chi} -1
\hspace{1cm} \text{and} \hspace{1cm}
p_+(-1) =   \hat\chi -1.
\end{eqnarray}
In particular, $p_+$ has different signs at the two zeros of $\tilde B$. Correspondingly,
$\alpha^-$ lies between these zeros, and $\tilde B(\alpha^-)<0$.
This proves (ii) in the case $v=0$, and by continuity, this also holds true for small
variations of $v$ around $v=0$. By continuity, upon variation of $(v,\hat\chi)$ in the connected domain
$(-1,1)\times(0,\infty)\setminus \{(0,1)\}$, the only way for $B$ to become positive on $\mathcal A^-$
would be to share a zero with $p_-$ or $p_+$; this would contradict property (i).
\end{proof}

Lemmas \ref{Lemma_A} and \ref{Lemma_B} readily imply the assertion of Proposition
\ref{nonlum_L_chi>0} except in the case when \eqref{cr} holds; analogously to the
the situation for $\hat\chi=0$, this case is covered by the following observation,
which is proved in the same way as Lemma \ref{sepcaseforchi=0}.
\begin{lemma} \label{sepcaseforchi>0}
In case \eqref{cr} holds, one finds ${\rm Re}(\tilde\Pi_v^L(\alpha_*+iw,\tilde\bxi))\neq 0$.
\end{lemma}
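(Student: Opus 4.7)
The plan is to mirror the argument used in Lemma \ref{sepcaseforchi=0}. Under the coincidence relation \eqref{cr}, Lemma \ref{Lemma_admissibility-region}(i) together with equation \eqref{zero_p1} identifies $\alpha_* = -2a/(\gamma(\hat\chi+1))$ as a common simple zero of both $p_1$ and $p_2$; Lemma \ref{Lemma_A}(iii) then gives $A(\alpha_*) = 0$ as well. Because of these simultaneous vanishings, the imaginary part \eqref{Im_Pi} of the dispersion relation is trivially satisfied at $\talpha = \alpha_*$ (both sides of \eqref{xi_general_chi} are zero), and the reduction $0 = A(\talpha)w^2 + B(\talpha)p_1^2(\talpha)$ from \eqref{Aw^2+B} likewise degenerates. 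The sign analysis of Lemmas \ref{Lemma_A}--\ref{Lemma_B} therefore does not directly apply, and one needs a finer version of \eqref{Aw^2+B}.

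I would obtain this by dividing \eqref{Aw^2+B} through by $p_1(\alpha)^2$, which is legal for $\alpha$ near but distinct from $\alpha_*$. Since $A$ is a polynomial linear combination of $p_1^2,\ p_1 p_2,\ p_2^2$ by \eqref{A} and both $p_1$ and $p_2$ vanish to first order at $\alpha_*$, $A$ vanishes to order at least two there, and the rational function $\hat A \equiv A/p_1^2$ extends continuously across $\alpha_*$. Passing to the limit $\alpha \to \alpha_*$ in the resulting identity yields the non-degenerate relation
\[
0 = \hat A(\alpha_*)\, w^2 + B(\alpha_*),
\]
which is the exact analogue of the displayed identity used in Lemma \ref{sepcaseforchi=0}.

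The contradiction then comes from signs. First, $B(\alpha_*) > 0$ follows immediately from Lemma \ref{Lemma_B}(ii) since under \eqref{cr} the interval $\mathcal{A}^+$ collapses to $\{\alpha_*\}$. Second, $\hat A(\alpha_*) \geq 0$ follows from a continuity argument: for parameters $(c_s^2,\hat\chi,v)$ with $\Gamma \neq 0$, Lemma \ref{Lemma_A}(ii) gives $A > 0$ on the open interval $\mathcal{A}^+$, on which $p_1$ cannot vanish (the unique zero $\alpha_*$ of $p_1$ belongs to $\mathcal A$ only when $p_2(\alpha_*) = 0$, i.e., when \eqref{cr} holds, by \eqref{zero_p1}); hence $A/p_1^2 > 0$ on the generic $\mathcal{A}^+$, and as $(c_s^2,\hat\chi,v)$ is driven continuously onto the surface $\{\Gamma = 0\}$, this positive ratio tends to $\hat A(\alpha_*) \geq 0$. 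Since $w = \pm|\tilde\bxi| \neq 0$, the right-hand side of the displayed identity is then strictly positive, the required contradiction.

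The main obstacle is the sign verification $\hat A(\alpha_*) \geq 0$. The continuity argument above is clean but relies on the non-singular behaviour of $A/p_1^2$ as the parameters approach $\{\Gamma = 0\}$; an explicit fallback would be to read $\hat A(\alpha_*)$ off from the quadratic-form representation \eqref{A_quad_form}, using that $p^+$ and $p^-$ both vanish simply at $\alpha_*$, together with the resultant computation from the proof of Lemma \ref{Lemma_A}(i) to guarantee that the coefficients $A^\pm(\alpha_*)$ and $C(\alpha_*)$ cannot conspire to give a negative value.
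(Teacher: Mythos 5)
Your proposal faithfully reproduces the strategy the paper itself invokes (it defers to the proof of Lemma \ref{sepcaseforchi=0}), but the central reduction does not close as written, and the gap is genuine. Equation \eqref{Aw^2+B} is not an identity in a free variable $\alpha$: it is a single scalar consequence of the dispersion relation at the hypothesized root $\talpha$, obtained by multiplying \eqref{Re_Pi} by $p_1(\talpha)^2$ and eliminating $\tilde\xi_1$ via \eqref{xi_general_chi}. When $\talpha=\alpha_*$, both $p_1(\alpha_*)$ and $p_2(\alpha_*)$ vanish, so \eqref{xi_general_chi} reads $0=0$ and places \emph{no} constraint on $\tilde\xi_1$, and \eqref{Aw^2+B} reads $0=0$. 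There is therefore no family of equations near $\alpha_*$ to divide by $p_1^2$ and pass to the limit in; more importantly, a correct argument must show that ${\rm Re}\,\tilde\Pi_v^L(\alpha_*+iw,\tilde\bxi)$ is nonzero for \emph{every} $\tilde\xi_1$ with $|\tilde\xi_1|\le|w|$, since the imaginary part no longer selects $\tilde\xi_1$. A $\tilde\xi_1$-free relation such as $0=\hat A(\alpha_*)w^2+B(\alpha_*)$ cannot encode that condition.

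The sign claim $\hat A(\alpha_*)\ge 0$ is also unreliable, and it is what makes your identity contradictory. The ratio $A/p_1^2$ is a $0/0$ form jointly in $(\alpha,\text{parameters})$ at $(\alpha_*,\Gamma_0)$, and the iterated limits need not agree: positivity of $A/p_1^2$ on the generic $\mathcal A^+$ does not pass to $\lim_{\alpha\to\alpha_*}A(\alpha)/p_1(\alpha)^2$ taken at fixed critical parameters. In the $\hat\chi=0$ analogue this is checkable: for $v=0$, $c_s^2=2/3$ one has $p_1(\alpha)=\alpha+2/3$ and $A(\alpha)=4(\alpha+2/3)^2(\alpha-1/3)$, so $A/p_1^2\to-4<0$ at $\alpha_*=-2/3$, even though $A>0$ on the interior of $\mathcal A$ for nearby non-critical parameters; with $B(\alpha_*)=4/27$ the relation $0=\hat A(\alpha_*)w^2+B(\alpha_*)$ would then even admit the solution $w^2=1/27$. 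What actually saves the lemma is the direct computation: substituting $\talpha=\alpha_*$ into the real part and using $p_1(\alpha_*)=p_2(\alpha_*)=0$ gives, in the $\hat\chi=0$ model case, ${\rm Re}\,\tilde\Pi_v^L(\alpha_*+iw,\tilde\bxi)=-a(v\tilde\xi_1-w)^2+4aw^2+B(\alpha_*)$, which exceeds $B(\alpha_*)>0$ because $|v\tilde\xi_1-w|<2|w|$. The present lemma requires the corresponding estimate for \eqref{Re_Pi} with $\hat\chi>0$, keeping $\tilde\xi_1$ as a free variable constrained only by $|\tilde\xi_1|\le|w|$; your proposal (like the paper's one-line reference) does not supply it.
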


\section{Mode stability for non-barotropic fluids}
\setcounter{equation}0
In this section, we consider modes in viscous, heat conductive flow
of general
non-barotropic fluids. A non-barotropic fluid is given by an equation of state
$p=p(\theta,\psi)$ that represents the
pressure as a function of temperature and thermodynamic potential; in terms of these the inviscid
parts of the energy-momentum tensor and the particle flux are
\be\label{TN}
T^{\alpha\beta}=\theta^3\frac{\partial p(\theta,\psi)}{\partial \theta}\psi^\alpha\psi^\beta
+p(\theta,\psi)g^{\alpha\beta},
\quad
N^\beta=\frac{\partial p(\theta,\psi)}{\partial \psi}\psi^\beta,
\ee
written in the Godunov variables $\psi^\alpha=u^\alpha/\theta$ and $\psi=(\rho+p)/(n\theta)-s$, where
$n,s$ denote the particle number density and specific entropy.
The Navier-Stokes-Fourier model proposed in \cite{FT2} is
\be\label{nsf}
\frac{\partial}{\partial x^{\beta}}\left(T^{\alpha\beta}+\Delta T^{\alpha\beta}_\Box\right)=0,
\quad
\frac{\partial}{\partial x^{\beta}}\left(N^\beta+\Delta N^\beta_\Box\right)=0
\ee
with $\Delta T^{\alpha\beta}_\Box$ from \eqref{DeltaT} and
\be\label{DeltaN}
-\Delta N^\beta
\equiv
\kappa g^{\beta\gamma}{\partial \psi\over \partial x^\gamma}
+\tilde\sigma\left(U^\beta \Pi^{\gamma\delta}-U^\delta\Pi^{\beta\gamma}\right)
{\partial U_\gamma\over \partial x^\delta},
\ee
where $\tilde\sigma=\sigma n/(\rho+p)$ and $\kappa$ is the coefficient of thermal conductivity.

Subsuming $\psi^\alpha$ and $\psi$ under $\psi^a, a=0,1,2,3,4,$ with $\psi^4=\psi$,
$T^{\alpha\beta}$ and $N^\beta$ as $T^{a\beta}$, and
$\Delta T^{\alpha\beta}_\Box$ and $\Delta N^\beta_\Box$ as $\Delta T^{a\beta}_\Box$,
the equations can be written concisely  as
\be\label{lin}
A^{a\beta g}\frac{\partial \psi_g}{\partial x^\beta}
=
B^{a\beta g\delta}_\Box\frac{\partial^2\psi_g}{\partial x^\beta\partial x^\delta};
\ee
with coefficients
$$
A^{a\beta g}=\frac{\partial T^{a\beta}}{\partial\psi_g},\quad a,g=0,1,2,3,4,
$$
representing the Euler part div$(T)$ and $B^{a\beta g\delta}_\Box$
the Navier-Stokes-Fourier part div$(-\Delta T_\Box)$.
With respect to the rest frame, the coefficients are given (cf.\ \cite{FT2,F20})
by the matrices
$$
A^{a0g}\equiv\begin{pmatrix}
\theta^3p_{\theta\theta}&0&\theta p_{\psi\theta}-p_\psi\\
0&\theta^2p_\theta I_3 &0\\
\theta p_{\psi\theta}-p_\psi&0&\theta^{-1}p_{\psi\psi}
\end{pmatrix},\quad
A^{ajg}\equiv
\begin{pmatrix}
0&\theta^2 p_\theta e_j^\top&0\\
\theta^2 p_\theta e_j&0&p_\psi e_j\\
0&p_\psi e_j^\top&0
\end{pmatrix},
$$
and  
$$
B^{a0g0}_\Box=
\begin{pmatrix}
0&0&0\\
0&-\sigma I_3&0\\
0&0&-\kappa
\end{pmatrix},
\quad
B^{ajgk}_\Box=
\begin{pmatrix}
0&0&0\\
0&
\eta I_3
+(\frac13\eta+\check\zeta) e_j e_k^\top&0\\
0&0&\kappa
\end{pmatrix},
\quad B^{ajg0}_\Box=0_{5}.
$$
The purpose of this section is to prove
\begin{theo}
Assume the fluid has speed of sound $0<c_s<1$. Then for system \eqref{nsf} with
\eqref{DeltaT} and \eqref{DeltaN} and
\be\label{etahatzetakappa}
\eta>0,\quad \check\zeta\ge -\frac13\eta,\quad \sigma=\frac43\eta+\check\zeta,\quad\kappa\geq 0,
\ee
all non-trivial Fourier-Laplace modes with respect to any Lorentz frame are decaying.
\end{theo}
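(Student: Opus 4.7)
The plan is to transport the stability-through-continuity argument used in the proof of Theorem~\ref{ThmD} to the non-barotropic setting. The three ingredients one needs are: (i) stability of the rest-frame dispersion relation, which for system \eqref{nsf} under \eqref{etahatzetakappa} is established in \cite{FT2}; (ii) the Lorentz-invariance of neutrality, which combined with (i) rules out neutral nontrivial Fourier-Laplace modes in every frame; and (iii) a boundedness lemma guaranteeing that the roots of the boosted dispersion polynomial depend continuously on the boost parameter $v\in(-1,1)$ and remain in a fixed bounded region as $v$ varies. Granting these three items, the Intermediate Value Theorem forbids a root with non-negative real part from appearing at any $v$, exactly as in the proof of Theorem~\ref{ThmD}.

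To set up the argument, one inserts a Fourier-Laplace ansatz into \eqref{lin} and writes the rest-frame dispersion polynomial as $\Pi(\lambda,\bxi)=\det M(\lambda,\bxi)$ with
\[
M(\lambda,\bxi)=\lambda A^{a0g}+i\xi_j A^{ajg}-\lambda^2 B^{a0g0}_\Box+\xi_j\xi_k B^{ajgk}_\Box,
\]
where the mixed first-order-in-$\lambda$ terms are absent because of the block structure of the $B_\Box$-matrices displayed in the excerpt. The Lorentz change of variables \eqref{Lorentz_var} yields the boosted polynomial $\tilde\Pi_v(\tilde\lambda,\tilde\bxi)=\Pi(\Lambda_v(\tilde\lambda,\tilde\bxi))$, whose coefficients depend continuously on $(v,\tilde\bxi)\in(-1,1)\times\R^3$. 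Ingredient~(ii) is then exactly the statement of Proposition~1 with the same proof: neutrality is characterized covariantly by $\text{Im }\xi^\beta=0$, so any neutral nontrivial Fourier-Laplace mode in some frame would also be one in the rest frame, contradicting (i).

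The main obstacle is the analog of Lemma~1. By \eqref{Lorentz_var}, the leading $\tilde\lambda$-coefficient of $\tilde\Pi_v(\cdot,\tilde\bxi)$ mixes the leading $\lambda^n$- and $\xi_1^n$-coefficients of $\Pi(\lambda,\xi_1,\tilde\xi_2,\tilde\xi_3)$ with weights $\gamma^n$ and $(i\gamma v)^n$, and one must keep this combination bounded away from zero uniformly for $|v|\le\bar v<1$. As in the barotropic case this reduces to the sharp causality estimate
\[
|\text{coef.\ of }\lambda^n\text{ in }\Pi(\lambda,\xi_1,0,0)|\ \ge\ |\text{coef.\ of }\xi_1^n|,
\]
a purely algebraic property of the $5\times 5$ principal symbol built from the matrices $A^{a\beta g}$ and $B^{a\beta g\delta}_\Box$ displayed above; this is essentially the non-barotropic version of the causality input used in the proof of Lemma~1, and is the only step requiring genuinely new computation relative to \cite{FT2}.

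Once this determinant check is in place, the Intermediate Value argument of Section~2 applies verbatim. Fix $\tilde\bxi\in\R^3\setminus\{0\}$ and a would-be frame with parameter $\tilde v$ carrying a non-decaying mode. Continuously deforming $v$ from $\tilde v$ to $0$, the bounded-root lemma prevents roots from escaping to infinity, so by continuity of the (finitely many) zeros of $\tilde\Pi_v(\cdot,\tilde\bxi)$ one of them must cross the imaginary axis at some intermediate $v^*\in(-1,1)$, producing a neutral mode and contradicting~(ii). Hence every non-trivial Fourier-Laplace mode in every Lorentz frame decays, which is the claim of Theorem~3.
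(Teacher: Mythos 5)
Your frame-transport architecture --- rule out neutral modes covariantly, bound the roots of $\tilde\Pi_v(\cdot,\tilde\bxi)$ uniformly in $|v|\le\bar v$, and invoke the Intermediate Value Theorem --- is exactly how the paper reduces Theorem~3 to the rest frame; the paper dispatches this in a single sentence (``direct analogues of Proposition~1 and Lemma~1''). The genuine gap is your ingredient~(i): you assert that decay of all rest-frame modes for system \eqref{nsf} under \eqref{etahatzetakappa} ``is established in \cite{FT2}'' and treat it as an off-the-shelf input. It is not; the abstract of the present paper explicitly claims mode stability for the non-barotropic model as a new result, and the entire substance of the paper's Section~5 proof is precisely this rest-frame computation. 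Concretely, the paper factors $\Pi=\Pi^L(\Pi^T)^2$, expands the $3\times3$ longitudinal determinant as
\[
\Pi^L(\lambda,\xi)=\lambda^3 f_1+\lambda^2(\lambda^2+\xi^2)f_2+\lambda\xi^2 f_3+\xi^2(\lambda^2+\xi^2)f_4+\lambda(\lambda^2+\xi^2)^2 f_5,
\]
determines the signs of $f_1,\dots,f_5$ from the positive definiteness of $A^{a0g}$ and from the Euler dispersion relation (which gives $f_3/f_1=c_s^2$), and then shows that $\lambda=i\beta$ is impossible by separating real and imaginary parts --- the hypothesis $c_s^2<1$ and the identity $f_2f_3-f_4f_1=(1/a_3)\bigl(\sigma f_1f_3+f_4(a_1a_4-a_2a_3)^2\bigr)>0$ are what make the contradiction work --- before concluding by continuity from small $\xi$. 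None of this appears in your proposal, so the central claim is left unproved. (The paper even warns that the closest available result in the literature, the diffusive case in \cite{S2}, does not yield Theorem~3 because the vanishing-diffusion limit is singular.)

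A secondary point: you identify the ``main obstacle'' as the causality estimate needed for the analogue of Lemma~1 (the leading-coefficient bound for the $5\times5$ symbol). That step does require a check, but it is routine --- it follows from the second-order (principal) part of the operator being causal, which was already verified for this model in \cite{FT2} --- and the paper treats it as such. The hard part of Theorem~3 is the rest-frame longitudinal mode analysis, not the boost. As written, your proof establishes only that rest-frame stability propagates to all frames; it does not establish rest-frame stability itself.
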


For the case that in addition to viscosity and heat conduction the dissipation also includes diffusion, an analogue of Theorem 3 is implicitly contained in Sec.\ 3 of  \cite{S2}. Note however that formally Theorem 3 does not readily follow from that analogue as the limit of vanishing diffusion is singular.

\begin{proof}
Using direct analogues of Proposition 1 and Lemma 1, we see that it is sufficient to
show the assertion for rest-frame modes. We can thus also assume w.\ l.\ o.\ g.\ that
the wave vector $\bxi$ is of the form $(\xi,0,0)$ with $\xi \ge 0$.

The dispersion relation $0=\Pi(\lambda,(\xi,0,0))$ factors again as $\Pi=\Pi^L(\Pi^T)^2$ with
longitudinal part
$$
\Pi^L(\lambda,\xi)=
\left|\begin{matrix}
\lambda a_1&i\xi a_3&\lambda a_2\\
i\xi a_3&\lambda a_3+\sigma(\lambda^2+\xi^2)&i\xi a_4\\
\lambda a_2&i\xi a_4&\lambda a_5+\kappa(\lambda^2+\xi^2)
\end{matrix}\right|
$$
and transverse part 
$$
\Pi^T(\lambda,\xi)=a_3\lambda+\sigma\lambda^2+\eta\xi^2
$$
with  
$
a_1=\theta^3p_{\theta\theta},\
a_2=\theta p_{\psi\theta}-p_\psi, \
a_3=\theta^2p_\theta, \
a_4=p_\psi, \
a_5=\theta^{-1}p_{\psi\psi}.
$
Regarding the transverse part, the assertion follows in exactly the same way as in the barotropic case in \cite{FT1,FT3}. To study the longitudinal part, we write
\be \nonumber
\Pi^L(\lambda,\xi) = \lambda ^3\: f_1 + \lambda^2 (\lambda^2 + \xi^2)\: f_2 + \lambda \xi^2\: f_3 + \xi^2 (\lambda^2 + \xi^2)\: f_4 + \lambda (\lambda^2 + \xi^2)^2 \: f_5
\ee
with
\begin{eqnarray}
f_1 &=& a_3 (a_1 a_5 - a_2^2) >0 ,    \cr
f_2 &=&  \sigma (a_1 a_5 - a_2^2) + \kappa a_3 a_1  >0  , \cr
f_3 &=&   a_3^2 a_5 - 2 a_3 a_2 a_4 + a_1 a_4^2 >0 ,  \cr
f_4 &=&   \kappa a_3^2 \geq 0 ,   \cr
f_5 &=&  \sigma \kappa a_1 \geq 0  ,
\end{eqnarray}
where 
the signs of the coefficients are
determined as follows: The positive definiteness of $A^{a0g}$ implies positivity of the diagonal
entries $a_1, a_3, a_5 >0$ and of the determinant $a_1 a_5 - a_2^2>0$. This directly implies
$f_1>0$, $f_5\geq 0$ and $f_2 = (f_4 a_1 +\sigma f_1)/a_3>0$; clearly $f_4\geq 0$.
Lastly $f_3$ has the same sign as $f_1$ by the dispersion relation of the inviscid (Euler) part,
\be \label{dispersion_Euler}
\pi^L(\lambda,\xi)= \lambda ^3 f_1  + \lambda \xi^2 f_3 =0,
\ee
which tells us that ${f_3}/{f_1} = c_s^2$. Now, to prove decay, we assume for contradiction that
\be \label{contradiction_non-bar}
\Pi^L(\lambda,\xi)= 0 \hspace{1cm} \text{for} \hspace{.5cm} \lambda = i\beta,
\ee
and compute
\begin{eqnarray}
{\rm Re}(\Pi^L(i\beta,\xi))  &=&  (\xi^2 - \beta^2) \big( \xi^2 f_4 - \beta^2 f_2 \big),   \cr
{\rm Im}(\Pi^L(i\beta,\xi))  &=& \beta \Big( \xi^2 f_3 - \beta^2 f_1 + (\xi^2 - \beta^2)^2 f_5 \Big) .
\end{eqnarray}
Solving the real part of \eqref{contradiction_non-bar} hence implies either $\xi^2 = \beta^2$ or
$\beta^2 = \xi^2f_4/f_2$. In the first case, the imaginary part of
\eqref{contradiction_non-bar} would give $f_3=f_1$,
which contradicts $c_s^2<1$.  In the second case, the imaginary part of \eqref{contradiction_non-bar} would imply
\be
f_2 f_3 - f_4 f_1 + \xi^2 \big(1-  f_4/f_2 \big)^2 f_5 =0,
\ee
a contradiction since $f_2 f_3 - f_4 f_1 = ({1}/{a_3})\big(\sigma f_1 f_3 + f_4 (a_1 a_4 - a_2 a_3)^2\big) >0$.
We have thus proved that Re$(\lambda)\neq 0$ for all modes.        

A trivial computation in the spirit of the proof of Lemma  6.6 in \cite{FT1} shows that Re$(\lambda)<0$
for small $\xi>0$. As Re$(\lambda)$ cannot reach $0$, the assertion follows by continuity.
\end{proof}


\begin{thebibliography}{99}

{ \bibitem{B} L. Brillouin: Wave Propagation and Group Velocity. Academic Press 1960.}

\bibitem{E}  C. Eckart, {\it The thermodynamics of irreversible processes. 3: Relativistic theory of the simple
fluid}, Phys.\ Rev.\ 58 (1940). 919--924.

\bibitem{F20} H. Freist\"uhler: {\it A class of Hadamard well-posed five-field theories
of dissipative relativistic fluid dynamics.} J.\ Math.\ Phys.\ {\bf 61} (2020), 033101.

\bibitem{FT3} H. Freist\"uhler and B. Temple:
{\it Causal dissipation in the relativistic dynamics of barotropic fluids}.
J. Math.\ Phys.\ {\bf 59} (2018), 063101.

\bibitem{FT2} H. Freist\"uhler and B. Temple:
{\it Causal dissipation for the relativistic dynamics of ideal gases}.
Proc.\ R. Soc.\ A {\bf 473} (2017), 20160729.

\bibitem{FT1} H. Freist\"uhler and B. Temple:
{\it Causal dissipation and shock profiles in the relativistic fluid dynamics of pure radiation}.
Proc.\ R. Soc.\ A {\bf 470} (2014), 20140055.

\bibitem{HL} W. A. Hiscock and L. Lindblom: {\it Generic instabilities in first-order dissipative
relativistic fluid theories}. Phys.\ Rev.\ D {\bf 31} (1985), 725--733

\bibitem{L}  L.D. Landau and E.M. Lifshitz: Fluid Mechanics, Pergamon Press,
London 1959. Section 127.

\bibitem{S2} M. Sroczinski: {\it Asymptotic stability in a second-order symmetric hyperbolic
system modeling the relativistic dynamics of viscous heat-conductive fluids with diffusion},
J.\ Differ.\ Equ.\ {\bf 268} (2020), 825--851.

\bibitem{S1} M. Sroczinski: {\it Asymptotic stability of homogeneous states in
the relativistic dynamics of viscous, heat-conductive fluids},
Arch.\ Ration.\ Mech.\ Anal.\ {\bf 231} (2019), 91--113.

\bibitem{W} S. Weinberg: Gravitation and Cosmology: Principles and Applications of the
General Theory of Relativity, John Wiley \& Sons, New York, 1972


\end{thebibliography}
\end{document}